\newcommand\numberthis{\addtocounter{equation}{1}\tag{\theequation}} 
\newcommand{\Nset}{\mathbb{N}}
\newcommand{\iguald}{\stackrel{d}{=}} 
\newcommand{\cd}{\stackrel{d}{\longrightarrow}} 
\newcommand{\cp}{\stackrel{p}{\longrightarrow}} 
\newcommand{\ind}{\mathds{1}} 
\newcommand{\ho}{\text{H$_0$}} 
\newcommand{\ha}{\text{H$_1$}} 
\theoremstyle{plain}
\newtheorem{theorem}{Theorem}[section]
\newtheorem{lemma}[theorem]{Lemma}
\newtheorem{proposition}[theorem]{Proposition}
\theoremstyle{definition}
\newtheorem{definition}[theorem]{Definition}
\newtheorem{example}[theorem]{Example}
\theoremstyle{remark}
\newtheorem{remark}{Remark}
\begin{document}

\title{Integer-valued autoregressive process with flexible marginal and innovation distributions}

\author{
 Matheus B. Guerrero \\
  King Abdullah University of Science and Technology, Thuwal, Saudi Arabia\\
  Universidade Federal de Minas Gerais, Belo Horizonte, Brazil \\
  \texttt{matheus.bartologuerrero@kaust.edu.sa} \\
   \And
 Wagner Barreto-Souza \\
  King Abdullah University of Science and Technology, Thuwal, Saudi Arabia\\
  Universidade Federal de Minas Gerais, Belo Horizonte, Brazil\\
  \And
 Hernando Ombao \\
  King Abdullah University of Science and Technology, Thuwal, Saudi Arabia\\
}

\maketitle

\begin{abstract}
\noindent INteger Auto-Regressive (INAR) processes are usually defined by specifying the innovations and the operator, which often leads to difficulties in deriving marginal properties of the process. In many practical situations, a major modeling limitation is that it is difficult to justify the choice of the operator. To overcome these drawbacks, we propose a new flexible approach to build an INAR model: we pre-specify the marginal and innovation distributions. Hence, the operator is a consequence of specifying the desired marginal and innovation distributions. Our new INAR model has both marginal and innovations geometric distributed, being a direct alternative to the classical Poisson INAR model. Our proposed process has interesting stochastic properties such as an MA($\infty$) representation, time-reversibility, and closed-forms for the transition probabilities $h$-steps ahead, allowing for coherent forecasting. We analyze time-series counts of skin lesions using our proposed approach, comparing it with existing INAR and INGARCH models. Our model gives more adherence to the data and better forecasting performance.

\keywords{Count time series \and Geometric process \and Operator \and Prediction \and Time-reversibility}

\end{abstract}

\section{Introduction}\label{chap:intro}

Count time series data arise naturally in many areas such as queuing systems, finance, insurance theory, medicine, epidemiology, among others. Thus, it is not surprising that there is a wide interest in models for count time series that can flexibly capture the features of the data. 

A usual way to deal with time series data is the AutoRegressive Moving Average (ARMA) models \citep{boxjenkins}. However, these real-valued models are not ideal for analyzing integer-valued time series data because of the nature of the data. The pioneering INteger-valued AutoRegressive (INAR) process \citep{McKenzie85, McKenzie88, Alzaid87} based on the binomial thinning operator \citep{vanHarn79} and in assuming Poisson innovations is an alternative for the usual ARMA process. Generally speaking, a thinning-based operator INAR process \citep{Scotto15} $\{X_t\}_{t\in \Nset}$ has the following stochastic structure:
\begin{eqnarray*}
X_t=\alpha\circ X_{t-1}+\epsilon_t, \quad t\in \Nset,
\end{eqnarray*}
where ``$\circ$'' is the thinning operator, introduced in \citep{vanHarn79}, replacing the usual multiplication; and $\{\epsilon_t\}_{t\in\mathbb N}$ is a sequence of independent and identically distributed (iid) integer-valued random variables called innovations. \citep{Alzaid87} gave an interpretation for the above INAR process in a population context, with $X_t$ denoting the count for the total population (of some region) at time $t$, $\alpha\circ X_{t-1}$ being the survivors from time $t-1$ and $\epsilon_t$ expressing the immigration.

Estimation of the parameters and forecasting for the Poisson first-order INAR (denoted by PINAR(1)) process (given by the above stochastic representation with Poisson innovations) were addressed in \citep{Freeland04a}, \citep{Freeland04b} and \citep{Freeland05}. High-order INAR processes were proposed in \citep{Alzaid90} and \citep{DuLi91}. 

Assuming the innovations to be Poisson distributed in a thinning-based operator INAR process implies that the marginals are also Poisson distributed. A limitation of this model is the fact that Poisson distribution can not accommodate underdispersion or overdispersion. This is a serious limitation because the model can produce results that are misleading. Alternatives to the PINAR(1) process have been proposed in the literature to overcome these situations. Some alternatives were introduced in \citep{McKenzie86} and \citep{Alosh92} with the processes having negative binomial and geometric marginals, respectively. A general class of mixed Poisson INAR(1) processes including the negative binomial and Poisson inverse-Gaussian models was recently introduced in \citep{Souza17}. Another important INAR process, called new geometric INAR (NGINAR) process, was proposed \citep{Ristic09}. This model is based on the negative binomial operator \citep{Aly94} and the marginals are geometric distributed. 

In some cases, there is inflation or deflation of zeros that can not be captured for example by the Poisson INAR process. In fact, the skin lesions dataset that is being analyzed contains many zeros. To work around such a limitation of the Poisson INAR process, \citep{Jazi12} and \citep{Souza15} proposed zero-inflated/deflated INAR models. INAR processes for dealing with negative integer-valued time series are due to \citep{freeland10}, \citep{Wang11} and \citep{SouzaBourguignon15}. Another extension allowing for random correlation parameter was introduced in \citep{Zheng06}, \citep{Zheng07}, \citep{Gomes09}, \citep{Wang11} and \citep{Zhao15}. Bayesian inference for INAR models is discussed in \citep{McCabe05} and \citep{Bisaglia16}, with special attention for prediction. More recently, \citep{Scotto2017} proposed a max-INAR model to deal with time series with sudden large counts, often caused by an extreme event, followed by monotone decreasing recovery phase. There is an extensive literature about INAR processes in the book \citep{weiss18} for a comprehensive treatment of this topic. 

A common feature of the INAR processes discussed above is that they are defined by fixing some operator (usually the classical binomial thinning) and also by specifying (a) the marginal or (b) the innovation distribution. Approach (b) is the one commonly done in practice (few exceptions may be found in \citep{weiss18}, page 24). With this approach, the limitation is that the marginal quantities cannot be obtained in simple form and that deriving a closed form for the transition probability is extremely difficult. Moreover, it is hard to justify the choice for the operator to be used.

The goal of this paper is to develop a novel statistical model that overcomes the limitations of the current models. More specifically, we propose a major refinement of INAR processes by specifying the marginal and innovation distributions rather than to fix some operator. With our new proposed approach, the operator to be considered as a consequence of having pre-specified the distribution of the marginals and innovations. For example, by specifying the marginals and innovations to be Poisson distributed we obtain the appropriate thinning operator that gave rise to desired marginal and innovations distributions. In particular, the advantages of our proposed approach are: (i) It provides an easy mechanism for simulating a stationary Markov chain without the need of numerical approximations. (ii)  Different operators rather than the classical thinning have different interpretations. If the operator admits values greater than 1, the operations in this case may be seen as a reproduction mechanism. (iii) It allows the user to specify the marginal distribution of a process which is important for incorporating a physical justification in the model (see \citep{livetal18}). 

For this, we define an INAR process with both marginals and innovations geometric distributed. A remarkable property of our geometric process is that it is time-reversible like the Poisson INAR and Gaussian AR(1) processes. We are unaware of other time-reversible INAR processes. Thus, one additional advantage of our proposed process process is that we derive closed-forms for the transition probabilities $h$-steps ahead, allowing for coherent forecasting without the need for numerical approximations.

We also compare our model to some of its natural competitors, such as the New Geometric INAR(1) model \citep{Ristic09} and the Integer-valued Generalized autoregressive conditional heteroscedastic (INGARCH) \citep{hei03,feretal06,foketal09,zhu11,silvaRB19}. Among the INAR competitors, we draw attention to the fact that the model in \citep{Ristic09} has geometric marginals, and it is defined through a negative binomial thinning operator. Yet, the resulting innovation no longer has a geometric distribution. On the INGARCH and INAR comparison, as discussed in \citep{weiss18}, there are pros and cons of each methodology. Beyond the INAR and INGARCH approaches, \citep{Cui09} and \citep{Lund15} proposed an alternative way of modeling count-time series by a linear combination of stationary renewal processes.

This paper is organized as follows. In Section \ref{chap:model}, a new perspective on INAR processes is presented. We construct a new INAR process with both marginals and innovations geometric distributed. In Section \ref{chap:properties} we derive main statistical properties. 
We develop estimators for our proposed geometric model and establish their asymptotic behavior in Section \ref{chap:estimation}. Furthermore, Monte Carlo simulations are provided to evaluate the finite-sample performance of the estimators and prediction. To study bovine skin lesions count data, which is a potential public health hazard, we use the proposed model. We report these interesting findings in Section \ref{chap:data}. The Appendix \ref{chap:demos} contains the proofs of the main propositions of the paper. All other proofs omitted in the text are given in the Supplementary Material (Appendix \ref{chap:sup}).

\section{New INAR processes}\label{chap:model}

An INteger-valued AutoRegressive (INAR) process $\{X_{t}\}_{t\in\mathbb N}$ can be defined by the following stochastic equation:
\begin{equation}\label{eq:process}
X_t = \bm{\theta}\star X_{t-1} + \varepsilon_{t},\quad t\in \mathbb N,
\end{equation}
where $\{ X_{t} \}_{t \in \Nset}$ is usually assumed to be stationary with marginal discrete distribution, $\star$ is a operator given by $\bm{\theta}\star X_{t-1} \equiv \sum_{i=1}^{X_{t-1}}{G_{i}}$, with $\{ G_{i} \}_{i\in\mathbb N}$ being a sequence of iid non-negative integer-valued random variables with common distribution $G$ depending on the parameter vector $\bm{\theta}$ and $\{ \varepsilon_{t} \}_{t\in\mathbb N}$ is a sequence of iid integer-valued random variables called innovations, which are independent of the sequence $\{G_i\}_{i\in\mathbb N}$. Furthermore, $X_{t-h}$ is independent of $\varepsilon_{t}$, $\forall h \geq 1$.

\begin{remark}
At each time $t$, a new thinning operation is performed independently of the past. Thus, the correct would be indexing the operator in time as $\bm{\theta}_t$. However, we avoid to do it to simplify the notation \citep{weiss18}.
\end{remark}

Under the stochastic structure in Eq. \eqref{eq:process}, for a chosen thinning operator, one of the two approaches may be adopted: (i) select the marginal distribution and derive the distribution of the innovation;  or (ii) choose the innovation distribution, then derive the distribution of the marginal. In the first approach (less common in the literature), despite given flexibility to a practitioner to choose the marginal distribution, in general, it is complicated to show that the process is well defined. While in the second approach, the practitioner cannot select a suitable distribution to the marginal, which often implies difficulties in obtaining the properties of the process. In fact, most of the time, the marginal distribution cannot be obtained explicitly \citep{Souza15}. Since the literature extensively considers these two approaches, \citep{weiss18} discusses all the technical nuances in choosing each one. A common drawback of both methods is that the operator must be chosen beforehand. Hence, in this paper, our main idea is to fix the marginal and innovation distributions, which provides flexibility to the practitioner and enables more control over the behavior of the count phenomena. As noted above, the operator is merely a tool for achieving the desired (user-specified) marginal and innovation distributions. 

Let $\Psi_Y(s)=E(s^Y)$ be the probability generating function (pgf) of a discrete random variable $Y$ for $s$ belonging some interval containing the value 1. By assuming $\{ X_{t} \}_{t \in \Nset}$ be a stationary process, from Eq. \eqref{eq:process} we obtain that
\begin{equation}
\Psi_{X}(s) = \Psi_{X}\left( \Psi_{G}(s) \right)\Psi_{\varepsilon}(s),
\label{eq:pgfgeneral1}
\end{equation} 
where $X_t\stackrel{d}{=}X$, $\varepsilon_t\stackrel{d}{=}\varepsilon$ and $G_i\stackrel{d}{=}G$. 
Additionally, if we assume $\Psi_{X}(s)$ is invertible, we have from Eq. \eqref{eq:pgfgeneral1} that
\begin{equation}
\Psi_{G}(s) = \Psi_{X}^{-1}\left( \frac{\Psi_{X}(s)}{\Psi_{\varepsilon}(s)} \right).
\label{eq:pgfinv}
\end{equation}

Therefore, once the distributions of $X_{t}$ and $\varepsilon_{t}$ are specified, Eq. \eqref{eq:pgfinv} represents a mechanism to obtain the distribution of $G$ through its pgf, as long as $\Psi_{G}(s)$ is a pgf from a proper discrete distribution. In what follows we discuss the Poisson case which is well-known in the literature \citep{McKenzie85, Alzaid87} and a new geometric case \citep{Ristic09}.

\begin{example} Let $X_{t}\sim \text{Poisson}(\mu)$, for $t\in \Nset$. Thus it is most natural to assume the innovations, $\varepsilon_t$, to be also Poisson distributed; $\varepsilon_{t}\sim \text{Poisson}((1-\alpha)\mu)$, for $0 < \alpha < 1$. 

We shall derive the necessary operator that gives the desired distributions. 

We have, for $s>0$, that $\Psi_{X}(s)=e^{-\mu(1-s)}$ and $\Psi_{\varepsilon}(s)=e^{-(1-\alpha)\mu(1-s)}$. Besides, $\Psi_{X}^{-1}(s) = 1 + \dfrac{1}{\mu}\log s$. From Eq. \eqref{eq:pgfgeneral1}, we obtain that
$$\Psi_G(s) = 1 - \alpha + \alpha s, \quad s>0,$$
which is the pgf of a Bernoulli distribution with success parameter $\alpha \in (0,1)$. With these conditions and under Eq. \eqref{eq:process}, the resulting operator is the well-known binomial thinning.
\end{example}

\begin{example}\label{ex_geom} Assume that $X_{t}\sim \text{Geo}(\mu)$, $\mu > 0$. More specifically, 
$$P(X_t=x) = \dfrac{\mu^{x}}{(1+\mu)^{x+1}},\, x = 0, 1, 2, \ldots.$$ 
Hence, 
$$E(X_{t}) = \mu\; \text{ and }\; Var(X_{t}) :=\sigma^{2} = \mu(1+\mu).$$ 

Also, assume the same distribution for the innovations, but with different mean; $\varepsilon_{t}\sim \text{Geo}((1-\alpha)\mu)$, for $t\in\Nset$ and $0 < \alpha < 1$. 

Denote $\mu_{\varepsilon}\equiv E(\varepsilon_{t})  = (1-\alpha)\mu$.

Which should be the operator that ensures the pre-specified distributions for the marginal and innovations?

For $|s| < \dfrac{1 + \mu}{\mu}$, we have that
\begin{equation}
\Psi_{X}(s) = \frac{1}{1+\mu(1-s)} \quad \Rightarrow \quad 
\Psi_{X}^{-1}(s) = 1 - \dfrac{1}{\mu}\left( \dfrac{1-s}{s} \right).   
\label{eq:pgfx}
\end{equation}

Analogously, 
\begin{equation}
\Psi_{\varepsilon}(s) = \frac{1}{1+\mu_{\varepsilon}(1-s)},\,\, |s| < \frac{1 + \mu_{\varepsilon}}{\mu_{\varepsilon}}.
\label{eq:pgfe}
\end{equation}

By substituting Eq. \eqref{eq:pgfx} and Eq. \eqref{eq:pgfe} in Eq. \eqref{eq:pgfinv} we have
\begin{equation}
\Psi_{G}(s) =\dfrac{1 + \left( 1 - \dfrac{\alpha}{(1-\alpha)\mu} \right)(1-\alpha)\mu(1-s)}{1+(1-\alpha)\mu(1-s)},\,\, |s| < \frac{1 + \mu_{\varepsilon}}{\mu_{\varepsilon}}.  \numberthis 
\label{eq:pgfzmg}	
\end{equation}

Therefore, Eq. \eqref{eq:pgfzmg} shows that $\Psi_{G}(s)$ is a pgf of a Zero-Modified Geometric (ZMG) distribution with parameters 
$$\pi = 1 - \frac{\alpha}{(1-\alpha)\mu} = 1 -\frac{\alpha}{\mu_{\varepsilon}} \quad \text{and} \quad \mu_{\varepsilon} = (1-\alpha)\mu,$$
both depending on the parameter vector $\bm{\theta} = (\mu, \alpha)$. 

We denote $G \sim \text{ZMG}(\pi, \mu_{\varepsilon})$ and its probability mass function (pmf) is given by
$$P(G = k)=
\left\{ 
\begin{array}{ll}
\pi + (1-\pi)\left(\dfrac{1}{1+\mu_{\varepsilon}}\right), &\text{for} \; k=0,\\[6pt]
(1-\pi)\dfrac{\mu_{\varepsilon}^{k}}{(1+\mu_{\varepsilon})^{k+1}}, &\text{for} \; k = 1, 2, \ldots . 
\end{array}
\right.$$

The mean and variance of $G$ are, respectively, 
$$E(G) = \mu_{\varepsilon}(1-\pi) = \alpha$$
and
$$Var(G) = \mu_{\varepsilon}(1-\pi)\left[ 1 + \mu_{\varepsilon}(1+\pi)\right] = (1 + 2\mu)(1-\alpha)\alpha.$$

Note that for $\pi \in ( -1/\mu_{\varepsilon}, 0 )$ and $\pi \in (0,1)$, we have a zero-deflated model and a zero-inflated model with respect to the geometric distribution, respectively. The ZMG distribution has the geometric distribution as a special case by taking $\pi = 0$.

We call the induced operator as {\it Zero-Modified Negative Binomial} (ZMNB) operator, denoted along the paper by $\star$, where a justification for this name comes from the conditional distribution given in the Proposition \ref{prop:sumzmg_cond}.
\end{example}

We are now able to define an INAR process with both marginals and innovations geometric distributed as follows.

\begin{definition}(Geo-INAR(1) process)
We say that a sequence $\{X_t\}_{t\in\mathbb N}$ is a first-order geometric INAR process if satisfies the stochastic equations given in Eq. \eqref{eq:process} with $\star$ being the Zero-Modified Negative Binomial operator given in the Example \ref{ex_geom}. 
\end{definition}

\begin{remark}
Our proposed geometric INAR(1) model will be denoted by Geo-INAR(1) process along this paper in order to be distinguished from the New geometric INAR(1) (NGINAR) process by \citep{Ristic09}. Also, note that our process is Markovian by definition.
\end{remark}

In Figure \ref{fig:paths}, we present simulated trajectories of our Geo-INAR(1) process when $\mu=1$, $5$ and $\alpha = 0,1$, $0.5$, $0.7$. Note that $\mu$ represents the mean of the count process, while $\alpha$ is related to the autocorrelation, as will be demonstrated in Proposition \ref{prop:transition}.

\begin{figure}[!htp]
	\centering
	\includegraphics[width=\textwidth]{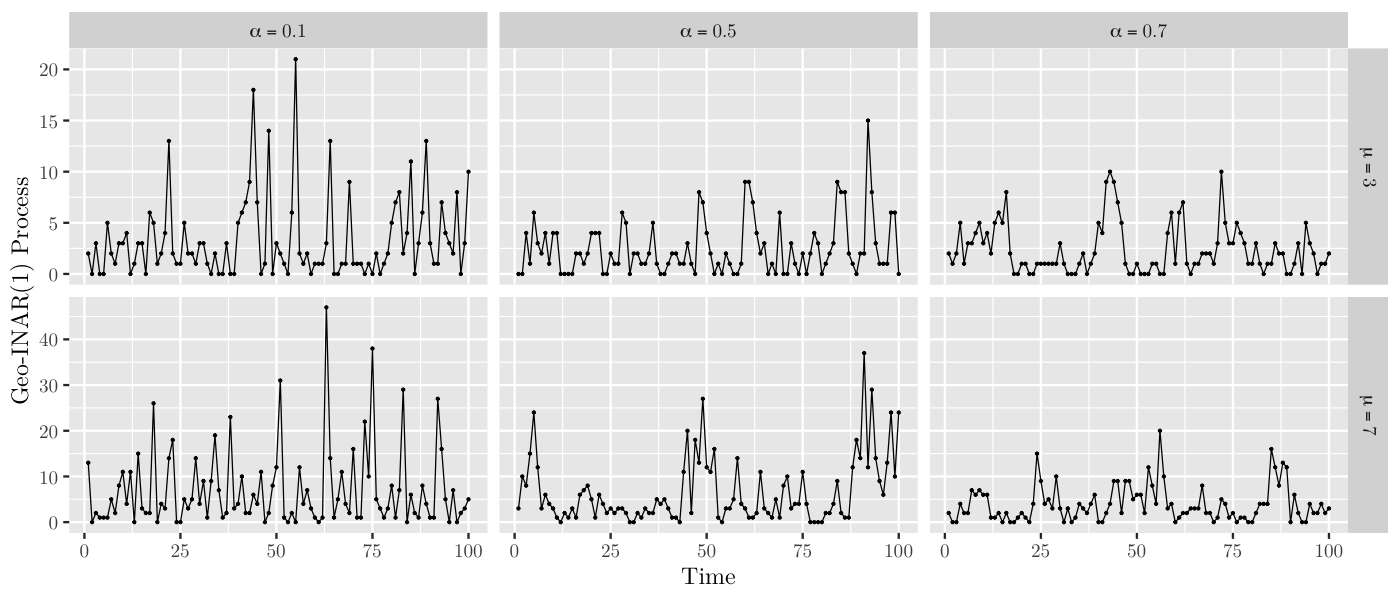}
	\caption{Simulated trajectories of the Geo-INAR(1) process for $\mu = 1, 5$ and $\alpha=0.1, 0.5, 0.7$. Sample size is equal to 100.}
	\label{fig:paths}
\end{figure}

To obtain some properties and quantities for the Geo-INAR process, we need some properties related to the $\star$ operator. In the following propositions, we state some important properties of this operator. For instance, the moments derived in Proposition \ref{prop:basic} are required in many of the proofs in the text. The proofs for Proposition \ref{prop:sumzmg_cond} and Proposition \ref{prop:transition} can be found in the Appendix \ref{chap:demos}. All the other proofs omitted in the text are given in the Supplementary Material (Appendix \ref{chap:sup}).

\begin{restatable}{proposition}{propgenerallabel}
\label{prop:basic}
Let $\{ G_{i} \}_{i=1}^\infty \stackrel{iid}{=} G \sim \text{ZMG}(\pi, \mu_{\varepsilon})$. Let $X$ and $Y$ be non-negative integer-valued random variables not necessarily independent of each other, but independent of the sequence $\{ G_{i} \}_{i=1}^\infty$. We have that
\begin{enumerate}
	\item [i.] $E(\bm{\theta} \star X) =  E(G)E(X);$
	\item [ii.] $E((\bm{\theta} \star X)^{2}) = Var(G)E(X) + E^{2}(G)E(X^{2});$
	\item [iii.] $E\left((\bm{\theta} \star X)Y\right) = E(G)E(XY);$
	\item [iv.] $Var(\bm{\theta} \star X) = Var(G)E(X) + E^2(G)Var(X);$
	\item [v.] $Cov(\bm{\theta} \star X, X) = E(G)Var(X).$
\end{enumerate}
\end{restatable}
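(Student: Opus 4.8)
The plan is to prove all five identities by conditioning on $X$ (and, for part iii, on the pair $(X,Y)$) and exploiting the fact that, conditionally on $X=n$, the quantity $\bm{\theta}\star X=\sum_{i=1}^{n}G_i$ is a sum of $n$ iid copies of $G$, with the convention that the empty sum equals $0$ when $n=0$. I would first note that only $E(G)$ and $Var(G)$ enter the final expressions, so the specific (ZMG) form of $G$ plays no role beyond guaranteeing a finite second moment; the values $E(G)=\alpha$ and $Var(G)=(1+2\mu)(1-\alpha)\alpha$ computed in Example \ref{ex_geom} may be substituted at the end if a closed form in $(\mu,\alpha)$ is desired.

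For part i, I would write $E(\bm{\theta}\star X)=E\big(E(\sum_{i=1}^{X}G_i\mid X)\big)=E\big(X\,E(G)\big)=E(G)E(X)$, which is the usual Wald-type identity and is legitimate because $\{G_i\}_{i\in\mathbb N}$ is independent of $X$. For part ii, conditioning on $X$ and using the elementary identity $E(S^{2})=Var(S)+(E S)^{2}$ applied to $S=\sum_{i=1}^{n}G_i$ (an iid sum, so $Var(S\mid X=n)=n\,Var(G)$ and $E(S\mid X=n)=n\,E(G)$) gives $E\big((\bm{\theta}\star X)^{2}\mid X\big)=X\,Var(G)+X^{2}E^{2}(G)$; taking expectations yields $Var(G)E(X)+E^{2}(G)E(X^{2})$. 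Part iii is analogous but conditions on the full vector $(X,Y)$: since $\{G_i\}_{i\in\mathbb N}$ is independent of $(X,Y)$, one has $E\big((\bm{\theta}\star X)\,Y\mid X,Y\big)=Y\cdot X\,E(G)$, and taking expectations gives $E(G)E(XY)$.

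Parts iv and v then follow by pure algebra from the previous items. For iv, $Var(\bm{\theta}\star X)=E((\bm{\theta}\star X)^{2})-E^{2}(\bm{\theta}\star X)=Var(G)E(X)+E^{2}(G)E(X^{2})-E^{2}(G)E^{2}(X)=Var(G)E(X)+E^{2}(G)Var(X)$, combining ii and i. For v, apply iii with $Y=X$ to obtain $E((\bm{\theta}\star X)X)=E(G)E(X^{2})$, whence $Cov(\bm{\theta}\star X,X)=E(G)E(X^{2})-E(G)E(X)\cdot E(X)=E(G)Var(X)$.

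The argument is essentially routine; the only point requiring care is the bookkeeping of the independence assumptions — one must use that the iid sequence $\{G_i\}_{i\in\mathbb N}$ is independent of $X$ and of $Y$ jointly (not merely marginally), so that the conditional expectations given $X$ or given $(X,Y)$ collapse to deterministic functions of those variables — together with the trivial but necessary handling of the empty-sum case $X=0$. No interchange-of-limits difficulty arises once a finite second moment of $G$ is assumed, which holds for the ZMG law.
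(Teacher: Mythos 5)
Your proposal is correct and follows essentially the same route as the paper: conditioning on $X$ (and on $(X,Y)$ for item iii), using that $\bm{\theta}\star X$ is conditionally an iid sum of copies of $G$, and then deducing iv and v algebraically from i--iii. The only cosmetic difference is that in item ii you invoke the variance of an iid sum directly, whereas the paper expands $\left(\sum_{i=1}^{X}G_i\right)^2$ into diagonal and off-diagonal terms before recombining; the two computations are identical in substance.
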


\begin{restatable}{proposition}{propsumzmgcondlabel}
\label{prop:sumzmg_cond} 
For $\mu > 0$ and $0< \alpha < \mu / (1 + \mu)$, the conditional distribution of $\bm{\theta}\star X|X = x$ is a Zero-modified negative binomial distribution with parameters $\pi$, $\mu_{\varepsilon}$ and $x \geq 1$, and with pmf given by
\begin{equation*}
P \left( \bm{\theta}\star X = k | X = x  \right) = \begin{cases}
\pi_{\star}^{x}, k=0,\\
\displaystyle{
\sum\limits_{i=1}^{k}{\mbox{A}_{i}^{x}(\pi_{\star})\mbox{B}_{i}^{k}(p) }}, k \geq 1,
\end{cases}
\end{equation*}
where
$$\mbox{A}_{i}^{x}(\pi_{\star}) := \displaystyle{\binom{x+i-1}{i}\pi_{\star}^{x}(1-\pi_\star)^{i}}\;
\text{ and }\; 
\mbox{B}_{i}^{k}(p) := \displaystyle{\binom{k-1}{i-1}p^{k-i}(1-p)^{i}}.$$
Also,
$$\pi_{\star} = \pi + (1-\pi)\dfrac{1}{1 + \mu_{\varepsilon}} = 1 - \dfrac{\alpha}{1+\mu_{\varepsilon}} \quad \text{and} \quad p = \dfrac{\mu_{\varepsilon} - \alpha}{1 + \mu_{\varepsilon} - \alpha}.$$

Furthermore, 
$$E(\bm{\theta}\star X = k | X = x) = x E(G)\; \text{ and }\; Var(\bm{\theta}\star X = k | X = x) = x Var(G).$$

We denote $ \bm{\theta}\star X| X = x \sim \text{ZMNB}(\pi, \mu_{\varepsilon}, x)$.
\end{restatable}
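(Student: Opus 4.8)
The plan is to reduce the whole statement to an identity between probability generating functions (pgfs). Since the operator acts as $\bm{\theta}\star X=\sum_{i=1}^{X}G_i$ with $\{G_i\}\iid\text{ZMG}(\pi,\mu_\varepsilon)$ and independent of $X$, conditioning on $X=x$ turns the claim into a description of the $x$-fold convolution $S_x:=G_1+\dots+G_x$, whose pgf is simply $\Psi_G(s)^x$. First I would record that the admissibility range $0<\alpha<\mu/(1+\mu)$ is exactly the condition $\pi=1-\alpha/\mu_\varepsilon\in(0,1)$, equivalently $\pi_\star\in(0,1)$ and $p\in(0,1)$, so that every quantity in the statement is a genuine probability and the ensuing probabilistic reading is legitimate.

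The algebraic heart of the argument is to rewrite $\Psi_G$ from Eq.~\eqref{eq:pgfzmg} by factoring numerator and denominator linearly, obtaining
\[
\Psi_G(s)=\frac{\pi_\star(1-ps)}{1-\frac{\mu_\varepsilon}{1+\mu_\varepsilon}s},
\]
with $\pi_\star$ and $p$ precisely as in the statement, and then observing that raising to the $x$-th power and rearranging puts this into the canonical compound–negative–binomial form
\[
\Psi_G(s)^x=\left(\frac{\pi_\star}{1-(1-\pi_\star)\,\frac{(1-p)s}{1-ps}}\right)^{\!x}.
\]
The rearrangement rests on the single numerical identity $\pi_\star(1-p)=1/(1+\mu_\varepsilon)$, which is immediate from the definitions of $\pi_\star$ and $p$. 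The right-hand side is recognisable as the pgf of a random sum $\sum_{\ell=1}^{N}T_\ell$ in which $N$ counts the failures before the $x$-th success in independent $\pi_\star$-coin tosses, so that $P(N=i)=\binom{x+i-1}{i}\pi_\star^{x}(1-\pi_\star)^{i}=\mathrm{A}_i^{x}(\pi_\star)$, and the $T_\ell$ are iid with $P(T_\ell=m)=(1-p)p^{m-1}$, $m\ge 1$ (pgf $\frac{(1-p)s}{1-ps}$), with the sum equal to $0$ exactly when $N=0$.

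The pmf then drops out by conditioning on $N$: $P(S_x=0)=P(N=0)=\pi_\star^{x}$, and for $k\ge 1$ the sum $T_1+\dots+T_i$ (the trial index of the $i$-th success in $(1-p)$-coin tosses) has mass $\binom{k-1}{i-1}p^{k-i}(1-p)^{i}=\mathrm{B}_i^{k}(p)$ at $k$, so the law of total probability over $i$ gives $P(S_x=k)=\sum_{i\ge 1}\mathrm{A}_i^{x}(\pi_\star)\mathrm{B}_i^{k}(p)=\sum_{i=1}^{k}\mathrm{A}_i^{x}(\pi_\star)\mathrm{B}_i^{k}(p)$, the terms with $i>k$ vanishing since then $T_1+\dots+T_i\ge i>k$. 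Equivalently, one may verify directly that the generating function of the right-hand side of the claimed pmf telescopes back to $\Psi_G(s)^x$; this is the same computation run in reverse. Finally, the conditional moments are immediate because, given $X=x$, the operator output is a sum of $x$ independent copies of $G$: $E(\bm{\theta}\star X\mid X=x)=xE(G)$ and $Var(\bm{\theta}\star X\mid X=x)=x\,Var(G)$; these also follow from Proposition~\ref{prop:basic}(i) and (iv) applied with the degenerate variable $X\equiv x$.

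I expect the only real obstacle to be bookkeeping in the algebraic step: recognising the correct linear factorisation of $\Psi_G$ and, more importantly, that its $x$-th power collapses into the compound–negative–binomial canonical form with exactly the stated $\pi_\star$ and $p$ — equivalently, resumming the naive double series obtained from expanding $\pi_\star^{x}(1-ps)^{x}\bigl(1-\tfrac{\mu_\varepsilon}{1+\mu_\varepsilon}s\bigr)^{-x}$ into the $\mathrm{A}_i^{x}\mathrm{B}_i^{k}$ shape. Once that identity is secured, the probabilistic reading and the moment computations are routine.
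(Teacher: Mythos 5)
Your proposal is correct, and while it starts from the same place as the paper's proof --- conditioning reduces the claim to the law of $S_x=\sum_{i=1}^{x}G_i$, whose pgf is $\Psi_G(s)^x$, and the crux is rewriting $\Psi_G$ in terms of $\pi_\star$ and $p$ --- the way you extract the pmf is genuinely different. The paper writes $\Psi_{S_x}(s)=\pi_\star^{x}\bigl(1-z(s)\bigr)^{-x}$ with $z(s)$ a rational function of $s$, expands twice via the negative binomial series $(1-z)^{-x}=\sum_i\binom{i+x-1}{i}z^i$, and then resums the resulting double series in $s^{i+j}$ with a Cauchy-product lemma to read off the coefficient of $s^k$; the constraint $\alpha<\mu/(1+\mu)$ appears there as the condition $|z|<1$ for convergence. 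You instead recognise $\Psi_G(s)^x=\bigl(\pi_\star/(1-(1-\pi_\star)\Psi_T(s))\bigr)^x$ as the pgf of a compound negative binomial $\sum_{\ell=1}^{N}T_\ell$ (with $N$ the number of failures before the $x$-th $\pi_\star$-success and $T_\ell$ iid geometric on $\{1,2,\dots\}$ with parameter $1-p$), after which the stated pmf is just the law of total probability over $N$, with $\mathrm{A}_i^{x}$ and $\mathrm{B}_i^{k}$ appearing as the pmfs of $N$ and of $T_1+\dots+T_i$; the parameter constraint appears as the requirement that $\pi,\,p\in(0,1)$ so the representation is probabilistic. I verified your two key identities ($\Psi_G(s)=\pi_\star(1-ps)/(1-\tfrac{\mu_\varepsilon}{1+\mu_\varepsilon}s)$ and $\pi_\star(1-p)=1/(1+\mu_\varepsilon)$, which together give the compound form), so the argument is complete. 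Your route buys a conceptual explanation of \emph{why} the $\mathrm{A}\cdot\mathrm{B}$ convolution structure arises and avoids the series-interchange lemma; the paper's route is a self-contained coefficient extraction that does not require recognising the compound-distribution form and stays closer to the Kolev et al.\ computation it cites. The moment claims are handled identically in both (iid sum of $x$ copies of $G$).
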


Note that the ZMNB distribution has the negative binomial distribution as a special case when $\pi = 0$.  

\begin{remark}
By convention, if $X = 0$, then $\bm{\theta}\star X = 0$, implying that 
$$P \left( \bm{\theta}\star X = k | X = 0  \right) =
\left\{
\begin{array}{l} 
    1,\; k=0,\\
    0,\; k \geq 1.
\end{array}
\right.
$$
Thus, the conditional pmf in Proposition \ref{prop:sumzmg_cond} is well defined for all values of $X$. 
\end{remark}

\begin{remark}
\citep{kolev2000} also obtain an expression for the pmf of a ZMNB distribution; however, it is computationally intractable. The expression we get is much simpler and easier to implement computationally.
\end{remark}

For our Geo-INAR(1) process, we can obtain the marginal distribution of $\bm{\theta}\star X$. Since $\Psi_{\bm{\theta}\star X}(s) = \Psi_{X}\left( \Psi_{G}(s) \right)$ and by using Eq. \eqref{eq:pgfx} and Eq. \eqref{eq:pgfzmg}, we obtain that
\begin{equation*}
\Psi_{\bm{\theta}\star X}(s) = \dfrac{1+(1-\alpha)\mu(1-s)}{1+\mu(1-s)}.
\end{equation*}

That is, $\bm{\theta}\star X \sim \text{ZMG}(1-\alpha, \mu)$. 

In particular, we get $E(\bm{\theta}\star X_{l}) = \mu - \mu_{\varepsilon}$ and $Var(\bm{\theta}\star X_{l}) = \sigma^{2} - \sigma_{\varepsilon}^{2}$.

Another interesting feature of our process is that the ZMG distribution is closed by associating operations. Consider successive operations of the $\star$ operator with the random variable $X\sim \text{Geo}(\mu)$, $\mu > 0$, such as 
$$\bm{\theta}_{h}\star \bm{\theta}_{h-1}\star \cdots \star \bm{\theta}_{1}\star X.$$

The objective is to determine the distribution of this random variable. In this context, we allowed the counting series related to the operation ``$\bm{\theta}_i \star$'' varies with the index $i$, i.e, $\forall i = 1, 2, \ldots h$,
$$\bm{\theta}_i \star X = \sum_{j=1}^{X}{G_{i,j}}, \;\; \text{where}\;\; G_{i, j} \stackrel{iid}{\sim} \text{ZMG}(\pi_i, {\mu_{\varepsilon}}_i),\;\; \forall i,j.$$

The case where all the $h$ random variables $G_i$ are iid ZMG distributed with parameters $\pi = 1 - \alpha/{\mu_{\varepsilon}}$ and ${\mu_{\varepsilon}} = (1-\alpha)\mu$, is written as
$${\bm{\theta}}^{(h)}\star X = \underbrace{\bm{\theta} \star \bm{\theta} \star \ldots \star  \bm{\theta}}_{h \text{ times}} \star X.$$
Note that, in this case, all operators $\star$ have the same counting series $G$, where $G \sim \text{ZMG}(\pi, {\mu_{\varepsilon}})$. Therefore, due to the Lemma \ref{prop.sum} given in the Supplementary Material (Appendix \ref{chap:sup}), we have that
$${\bm{\theta}}^{(h)}\star X = \sum_{j=1}^{X}G_{j}^{(h)},\; \text{where}\; G_{j}^{(h)}\sim \text{ZMG}\left(1-\frac{\alpha^h}{\left(1-\alpha^h\right)\mu}, \left(1-\alpha^h\right)\mu \right), \forall j.$$

We summarize these properties in the following proposition.

\begin{restatable}{proposition}{propassociatedoperatorslabel}
\label{theo:prodZMG}
Let $X\sim \text{Geo}(\mu)$, $\mu > 0$. For $\bm{\theta}_i = (\mu, \alpha_i)$, $0 < \alpha_i < 1$ let
$$G_{i} \sim \text{ZMG}(\pi_i, {\mu_{\varepsilon}}_i),\;\; \pi_i = 1- \alpha_{i}/{\mu_{\varepsilon}}_i\;\; \text{and}\;\; {\mu_{\varepsilon}}_i = (1-\alpha_i)\mu,\; \forall i = 1, 2, \ldots h.$$

Then,
$$\bm{\theta}_{h}\star \bm{\theta}_{h-1}\star \cdots \star \bm{\theta}_{1}\star X \sim \text{ZMG}\left(1 - \prod_{i=1}^{h}\alpha_i, \mu \right).$$

The case where all the $h$ random variables $G_i$ have the same parameters $\pi = 1- \alpha/{\mu_{\varepsilon}}$ and ${\mu_{\varepsilon}} = (1-\alpha)\mu$ leads to
$${\bm{\theta}}^{(h)}\star X \sim \text{ZMG}\left(1-\alpha^{h}, \mu\right).$$
\end{restatable}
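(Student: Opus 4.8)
The plan is to argue entirely at the level of probability generating functions, leaning on the fact established just before the statement that $\bm{\theta}\star X\sim\text{ZMG}(1-\alpha,\mu)$ whenever $X\sim\text{Geo}(\mu)$. Since $\star$ is right-associative, the random variable $\bm{\theta}_h\star\bm{\theta}_{h-1}\star\cdots\star\bm{\theta}_1\star X$ is obtained by applying $\bm{\theta}_1\star$ to $X$, then $\bm{\theta}_2\star$ to the result, and so on; because each operation is an independent random sum, its pgf is the nested composition $\Psi_X\circ\Psi_{G_1}\circ\Psi_{G_2}\circ\cdots\circ\Psi_{G_h}$, valid on a neighbourhood of $s=1$ where every factor is defined.

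First I would record that, from Eq.~\eqref{eq:pgfzmg} (equivalently from the pmf), $\text{ZMG}(\pi,\mu_0)$ has pgf $\pi+(1-\pi)\,\Psi_{\text{Geo}(\mu_0)}(s)$, so a ZMG pgf is a mixture-weight perturbation of a geometric pgf. Then I would prove the one-step closure lemma: if $Y\sim\text{ZMG}(\pi,\mu)$ then $\bm{\theta}_i\star Y\sim\text{ZMG}\bigl(1-\alpha_i(1-\pi),\,\mu\bigr)$. This takes a couple of lines: $\Psi_{\bm{\theta}_i\star Y}(s)=\Psi_Y\bigl(\Psi_{G_i}(s)\bigr)=\pi+(1-\pi)\,\Psi_{\text{Geo}(\mu)}\bigl(\Psi_{G_i}(s)\bigr)$, and the inner composition $\Psi_{\text{Geo}(\mu)}\circ\Psi_{G_i}$ is exactly the pgf of $\bm{\theta}_i\star(\text{Geo}(\mu))$, which the text has already identified as $\text{ZMG}(1-\alpha_i,\mu)$, i.e. $(1-\alpha_i)+\alpha_i\,\Psi_{\text{Geo}(\mu)}(s)$; substituting and collecting the constant term gives the claim, with $1-\pi$ getting multiplied by $\alpha_i$. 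The key structural point is that the second parameter stays equal to $\mu$ at every step, so the family $\{\text{ZMG}(\cdot,\mu)\}$ is invariant under each $\bm{\theta}_i\star$.

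With this lemma the proposition is an induction on $h$. Starting from $X\sim\text{Geo}(\mu)=\text{ZMG}(0,\mu)$ and applying the lemma successively, the survival weight $1-\pi$ is multiplied by $\alpha_i$ at stage $i$, so after $h$ steps it equals $\prod_{i=1}^h\alpha_i$; hence $\bm{\theta}_h\star\cdots\star\bm{\theta}_1\star X\sim\text{ZMG}\bigl(1-\prod_{i=1}^h\alpha_i,\,\mu\bigr)$, and $\alpha_i\equiv\alpha$ specialises to $\text{ZMG}(1-\alpha^h,\mu)$. An alternative, more computational route — useful as a cross-check — is to observe that each of $\Psi_X,\Psi_{G_1},\dots,\Psi_{G_h}$ satisfies $1-\Psi(s)=\dfrac{a(1-s)}{1+b(1-s)}$ for suitable constants $a,b$; in these coordinates composition of pgfs is governed by the group law $(a_1,b_1)\cdot(a_2,b_2)=(a_1a_2,\,b_1+a_1b_2)$ (order of application), and carrying out the composition reduces the whole statement to the telescoping identity $\sum_{i=1}^h(1-\alpha_i)\prod_{j=i+1}^h\alpha_j=1-\prod_{i=1}^h\alpha_i$.

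There is no serious obstacle here: the computation is routine once the composition-of-pgf formula is in place. The only points requiring care are (i) getting the associativity/order of the nested operators right — and noting that, since the resulting first parameter is a product of the $\alpha_i$, the order is in fact immaterial; (ii) checking that the second ZMG parameter is preserved as $\mu$ at every step, so the family $\{\text{ZMG}(\cdot,\mu)\}$ is closed under each $\bm{\theta}_i\star$; and (iii) invoking uniqueness of the pgf to pass from the computed generating function back to the $\text{ZMG}$ distribution. One should also keep track of the neighbourhood of $s=1$ on which all the pgfs are jointly valid, but this is harmless for $0<\alpha_i<1$.
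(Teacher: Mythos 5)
Your argument is correct, and it reaches the conclusion by a genuinely different decomposition than the paper. The paper's route is to first prove (its Lemma~\ref{lemma.comp}, by induction with explicit manipulation of the rational forms $1-\Psi_{G_i}(s)=\alpha_i(1-s)/(1+{\mu_\varepsilon}_i(1-s))$) that the nested composition $\Psi_{G_h}\circ\cdots\circ\Psi_{G_1}$ is itself the pgf of a ZMG variable with \emph{modified} parameters $\bigl(1-\tfrac{\prod_i\alpha_i}{(1-\prod_i\alpha_i)\mu},\,(1-\prod_i\alpha_i)\mu\bigr)$, and only then to compose with $\Psi_X$ and simplify. You instead track the marginal law of the partial results $\bm{\theta}_i\star\cdots\star\bm{\theta}_1\star X$, and your key observation --- that a ZMG$(\pi,\mu)$ pgf is the affine mixture $\pi+(1-\pi)\Psi_{\mathrm{Geo}(\mu)}$, so that applying $\bm{\theta}_i\star$ fixes the family $\{\text{ZMG}(\cdot,\mu)\}$ and simply multiplies the weight $1-\pi$ by $\alpha_i$ --- replaces the paper's algebra with a two-line closure lemma; I verified that $\pi+(1-\pi)\bigl[(1-\alpha_i)+\alpha_i\Psi_{\mathrm{Geo}(\mu)}(s)\bigr]=\bigl[1-\alpha_i(1-\pi)\bigr]+\alpha_i(1-\pi)\Psi_{\mathrm{Geo}(\mu)}(s)$, so the step is sound, and the base case $X\sim\text{ZMG}(0,\mu)$ starts the induction correctly. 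Your version is shorter and more conceptual, and your remark that the order of the operators is immaterial is borne out by the fact that the paper itself writes the composition with $\Psi_{G_1}$ innermost while the right-associative reading puts $\Psi_{G_h}$ there. What the paper's heavier lemma buys, and yours does not, is the identification of the composed counting variable $G^{(h)}$ itself as ZMG with the modified parameters --- a by-product the paper reuses in the representation ${\bm\theta}^{(h)}\star X=\sum_{j=1}^{X}G_j^{(h)}$ and in Propositions~\ref{theo:igualD} and~\ref{prop:DGINARasMAinfty} --- so if your proof were adopted that fact would still need a separate (short) derivation. Your $(a,b)$-coordinate cross-check and the telescoping identity $\sum_{i=1}^h(1-\alpha_i)\prod_{j=i+1}^h\alpha_j=1-\prod_{i=1}^h\alpha_i$ are also correct, though redundant given the main argument.
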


The properties presented in this section play an essential role in the next sections, where we obtain an MA representation for our process, its conditional moments, as well as the transition probabilities $h$-steps ahead, an important mechanism for coherent forecasting.

\section{Stochastic properties of the Geo-INAR(1) process}\label{chap:properties}

We here present some stochastic properties and also some important quantities like joint probability function and conditional moments for our proposed geometric INAR(1) process.
We begin by presenting some stochastic representations.

\begin{restatable}{proposition}{theoigualIDlabel}
	\label{theo:igualD}
	Let $\{X_t\}_{t\in\mathbb N}$ be a Geo-INAR(1) process. Then, we have that
	\begin{equation*}
	X_{t+h} \iguald {\bm{\theta}}^{(h)}\star X_{t} + \varepsilon_{t}^{(h)},    
	\end{equation*}
	where $\varepsilon_{t}^{(h)}\equiv\sum\limits_{j=0}^{h-1}{\bm{\theta}}^{(j)}\star\varepsilon_{t +h - j}$, for $t, h \geq 1$.
\end{restatable}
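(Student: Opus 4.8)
\emph{Proof plan.} The plan is to argue by induction on $h\ge 1$, iterating the defining recursion $X_t=\bm{\theta}\star X_{t-1}+\varepsilon_t$ of Eq.~\eqref{eq:process} and using two elementary structural facts about the operator $\star$ (both implicit in the discussion preceding Proposition~\ref{theo:prodZMG}). The first is an \emph{additivity} property: if $A$ and $B$ are non-negative integer-valued, independent of each other and of the counting sequence $\{G_i\}$, then
$\bm{\theta}\star(A+B)=\sum_{i=1}^{A}G_i+\sum_{i=A+1}^{A+B}G_i\iguald \bm{\theta}\star A+\bm{\theta}^{\dagger}\star B$,
where $\bm{\theta}^{\dagger}\star$ uses an independent copy of $\{G_i\}$ and the two resulting summands are independent (conditionally on $(A,B)$ the two blocks of $G_i$'s are independent, distributed as sums over $A$, resp.\ $B$, iid terms). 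The second is a \emph{composition} property: since $\bm{\theta}^{(j)}\star$ denotes $j$ successive independent applications of $\bm{\theta}\star$ and $\bm{\theta}^{(0)}\star Y:=Y$, we have $\bm{\theta}\star\bigl(\bm{\theta}^{(j)}\star Y\bigr)\iguald\bm{\theta}^{(j+1)}\star Y$ for any $Y$ independent of the counting sequences involved.

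For the base case $h=1$ the claimed identity is literally Eq.~\eqref{eq:process}, since $\bm{\theta}^{(1)}\star X_t=\bm{\theta}\star X_t$ and $\varepsilon_{t}^{(1)}=\bm{\theta}^{(0)}\star\varepsilon_{t+1}=\varepsilon_{t+1}$. For the inductive step, assume $X_{t+h}\iguald S_h:=\bm{\theta}^{(h)}\star X_t+\sum_{j=0}^{h-1}\bm{\theta}^{(j)}\star\varepsilon_{t+h-j}$, where the terms on the right are built from mutually independent counting sequences, independent of $X_t$ and of the $\varepsilon$'s. Write $X_{t+h+1}=\bm{\theta}\star X_{t+h}+\varepsilon_{t+h+1}$; the fresh thinning used here (call its counting sequence $G^{\mathrm{new}}$) and $\varepsilon_{t+h+1}$ are independent of $X_{t+h}$, and $S_h$ is independent of $(G^{\mathrm{new}},\varepsilon_{t+h+1})$ as well, so $(X_{t+h},G^{\mathrm{new}},\varepsilon_{t+h+1})\iguald(S_h,G^{\mathrm{new}},\varepsilon_{t+h+1})$; applying the measurable map $(x,g,e)\mapsto\sum_{i=1}^{x}g_i+e$ gives $X_{t+h+1}\iguald\bm{\theta}\star S_h+\varepsilon_{t+h+1}$. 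Now push the fresh $\bm{\theta}\star$ through the sum defining $S_h$ via the additivity property, and apply the composition property term by term: $\bm{\theta}\star\bigl(\bm{\theta}^{(h)}\star X_t\bigr)\iguald\bm{\theta}^{(h+1)}\star X_t$ and $\bm{\theta}\star\bigl(\bm{\theta}^{(j)}\star\varepsilon_{t+h-j}\bigr)\iguald\bm{\theta}^{(j+1)}\star\varepsilon_{t+h-j}$. Reindexing $j'=j+1$ and appending $\varepsilon_{t+h+1}=\bm{\theta}^{(0)}\star\varepsilon_{t+h+1}$ yields $X_{t+h+1}\iguald\bm{\theta}^{(h+1)}\star X_t+\sum_{j'=0}^{h}\bm{\theta}^{(j')}\star\varepsilon_{t+h+1-j'}=\bm{\theta}^{(h+1)}\star X_t+\varepsilon_{t}^{(h+1)}$, and one checks that the pieces so obtained are again mutually independent with the required independence from $X_t$ and the $\varepsilon$'s, so the induction continues.

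I expect the main difficulty to be bookkeeping rather than anything conceptual: one must be explicit about which random objects are independent of which, in particular that each $\bm{\theta}\star$ appearing in the recursion uses a counting sequence independent of the past (cf.\ the Remark on time-indexing the operator), that $\varepsilon_{t+h+1}$ is independent of $X_{t+h}$ and of that fresh sequence, and that the summands on the right-hand side of the inductive hypothesis stay mutually independent after the operation is applied. Once these relations are spelled out, the additivity and composition properties of $\star$ reduce the inductive step to mechanical reindexing.
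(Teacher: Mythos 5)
Your proof is correct, but it takes a genuinely different route from the paper's. The paper argues purely at the level of marginal laws: it invokes Proposition~\ref{theo:prodZMG} to identify $\bm{\theta}^{(h)}\star X_{t}\sim \text{ZMG}(1-\alpha^{h},\mu)$ and $\varepsilon_{t}^{(h)}\sim \text{Geo}((1-\alpha^{h})\mu)$, and then applies Lemma~\ref{prop.sum} to conclude that the sum of these two independent pieces is $\text{Geo}(\mu)$, i.e.\ has the same distribution as $X_{t+h}$; the identity is thus verified by matching probability generating functions rather than by unrolling the recursion. Your induction instead derives the decomposition structurally from Eq.~\eqref{eq:process}, using only the additivity $\bm{\theta}\star(A+B)\iguald\bm{\theta}\star A+\bm{\theta}^{\dagger}\star B$ and the composition rule $\bm{\theta}\star\bigl(\bm{\theta}^{(j)}\star Y\bigr)\iguald\bm{\theta}^{(j+1)}\star Y$. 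What your approach buys is generality and a somewhat stronger conclusion: it uses no geometric/ZMG closure properties, so it would apply verbatim to any process of the form in Eq.~\eqref{eq:process}, and it exhibits the right-hand side as an explicit construction in which $\bm{\theta}^{(h)}\star X_{t}$ and the summands of $\varepsilon_{t}^{(h)}$ are mutually independent --- precisely the structure the paper later relies on when computing the $h$-step transition probabilities in Proposition~\ref{prop:transHahead}. What the paper's route buys is brevity, since the two distributional identities it needs are already established, at the cost of proving only equality of univariate marginals and of leaving implicit the identification of each $\bm{\theta}^{(j)}\star\varepsilon_{t+h-j}$. The independence bookkeeping is the only delicate point in your argument, and you handle it correctly.
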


\begin{proof}
	By applying Proposition \ref{theo:prodZMG}, we obtain that ${\bm{\theta}}^{(h)}\star X_{t} \sim \text{ZMG}(1-\alpha^{h}, \mu)$ and $\varepsilon_{t}^{(h)} \sim \text{Geo}((1-\alpha^{h}) \mu)$. Now, the desired result follows by using Lemma \ref{prop.sum} from the Supplementary Material (Appendix \ref{chap:sup}).
\end{proof}

\begin{restatable}{proposition}{DGINARasMAinftylabel}
	\label{prop:DGINARasMAinfty}
	A Geo-INAR(1) process $\{X_t\}_{t\in\mathbb N}$ can be expressed as a kind of moving average process of infinity order (MA($\infty$)) as follows:
	\begin{equation*}
	X_t \iguald  \sum_{j=0}^\infty \bm{\theta}^{(j)} \star \varepsilon _{t-j}.
	\end{equation*}
\end{restatable}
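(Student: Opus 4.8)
The plan is to obtain the MA($\infty$) representation as a limiting case of the finite-truncation identity already established in Proposition~\ref{theo:igualD}. Starting from the stochastic equation \eqref{eq:process}, one iterates it $h$ times to write $X_t \iguald \bm{\theta}^{(h)}\star X_{t-h} + \sum_{j=0}^{h-1}\bm{\theta}^{(j)}\star\varepsilon_{t-j}$, which is exactly the content of Proposition~\ref{theo:igualD} after re-indexing. The only thing left is to let $h\to\infty$ and argue that the leftover term $\bm{\theta}^{(h)}\star X_{t-h}$ vanishes while the partial sums $\sum_{j=0}^{h-1}\bm{\theta}^{(j)}\star\varepsilon_{t-j}$ converge to a well-defined random variable.

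First I would handle the remainder term. By Proposition~\ref{theo:prodZMG}, $\bm{\theta}^{(h)}\star X_{t-h}\sim\text{ZMG}(1-\alpha^h,\mu)$, and since $0<\alpha<1$ we have $\alpha^h\to 0$, so the pgf $\Psi_{\bm{\theta}^{(h)}\star X_{t-h}}(s)=\dfrac{1+\alpha^h\mu(1-s)}{1+\mu(1-s)}\to 1$ pointwise in a neighbourhood of $s=1$; equivalently $E(\bm{\theta}^{(h)}\star X_{t-h})=\alpha^h\mu\to 0$, which forces $\bm{\theta}^{(h)}\star X_{t-h}\cp 0$ by Markov's inequality. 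Next I would address convergence of the series $\sum_{j=0}^{\infty}\bm{\theta}^{(j)}\star\varepsilon_{t-j}$: the summands are non-negative, so the partial sums are monotone increasing in $h$ and hence converge almost surely to a (possibly infinite) limit; to see the limit is finite almost surely I would compute $E\big(\bm{\theta}^{(j)}\star\varepsilon_{t-j}\big)=E(G^{(j)})E(\varepsilon)=\alpha^j\,\mu_\varepsilon$ using Proposition~\ref{prop:basic}(i) together with $\bm{\theta}^{(j)}\star X\sim\text{ZMG}$ from Proposition~\ref{theo:prodZMG} (so its mean is $\alpha^j$ times the mean of the argument), whence $\sum_{j=0}^\infty E\big(\bm{\theta}^{(j)}\star\varepsilon_{t-j}\big)=\mu_\varepsilon/(1-\alpha)=\mu<\infty$; by the monotone convergence theorem the series has finite expectation and therefore converges almost surely, and in fact in $L^1$.

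Finally I would identify the distribution of the limit. From Proposition~\ref{theo:igualD} (with $t$ replaced by $t-h$), $X_t \iguald \bm{\theta}^{(h)}\star X_{t-h}+\varepsilon^{(h)}$ where $\varepsilon^{(h)}=\sum_{j=0}^{h-1}\bm{\theta}^{(j)}\star\varepsilon_{t-j}\sim\text{Geo}((1-\alpha^h)\mu)$. Letting $h\to\infty$: the first term tends to $0$ in probability, so by Slutsky the right-hand side converges in distribution to $\lim_{h}\varepsilon^{(h)}$; on the other hand $\varepsilon^{(h)}\to\sum_{j=0}^\infty\bm{\theta}^{(j)}\star\varepsilon_{t-j}$ almost surely, hence in distribution. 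Since the left-hand side $X_t\sim\text{Geo}(\mu)$ does not depend on $h$, the distributional limit must be $\text{Geo}(\mu)$, and therefore $\sum_{j=0}^\infty\bm{\theta}^{(j)}\star\varepsilon_{t-j}\iguald X_t$, which is the claim. (As a consistency check, $\text{Geo}((1-\alpha^h)\mu)\cd\text{Geo}(\mu)$ directly from the pgf, matching the limit of $\varepsilon^{(h)}$.)

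The main obstacle is the bookkeeping needed to make the interchange of limit and infinite sum rigorous, i.e.\ justifying that the almost-sure monotone limit of the partial sums is genuinely finite and that passing to the limit preserves the distributional identity; once the uniform moment bound $\sum_j \alpha^j\mu_\varepsilon<\infty$ is in hand, monotone convergence and Slutsky dispatch this cleanly, so there is no deep difficulty, only care in assembling the pieces. One subtlety worth a remark: the operators $\bm{\theta}^{(j)}$ acting on different innovations must be taken with mutually independent counting series (consistent with the Remark after \eqref{eq:process}), so that the ``$\iguald$'' in Proposition~\ref{theo:igualD} iterates correctly and the infinite series is well-defined as an honest random variable.
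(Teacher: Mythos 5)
Your proposal follows essentially the same route as the paper's own proof: decompose $X_t$ via Proposition~\ref{theo:igualD}, kill the remainder $\bm{\theta}^{(h)}\star X_{t-h}$ in probability via Markov's inequality using its mean $\mu\alpha^h$, and identify the limit of the partial sums as $\text{Geo}(\mu)$ from the fact that $\varepsilon^{(h)}\sim\text{Geo}((1-\alpha^h)\mu)$. The only difference is that you additionally verify almost-sure convergence of the infinite series via monotonicity and the summable moment bound $\sum_j\alpha^j\mu_\varepsilon=\mu<\infty$, a point the paper leaves implicit; this is a welcome extra bit of rigor but not a different argument.
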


\begin{proof}
		By using Proposition \ref{theo:igualD}, we can write
		\begin{equation*}
		X_{t} \iguald {\bm{\theta}}^{(h+1)}\star X_{t-h-1} + \sum\limits_{j=0}^{h}{\bm{\theta}}^{(j)}\star\varepsilon_{t - j}\;, t \geq 1, \forall h\geq1.  
		\end{equation*}
        We have that ${\bm{\theta}}^{(h+1)}\star X_{t-h-1} \sim \text{ZMG}(1-\alpha^{h+1}, \mu)$, with expected value given by $E\left({\bm{\theta}}^{(h+1)}\star X_{t-h-1}\right) = \mu\alpha^{h+1}$. Then, $\forall \xi > 0$ and as $h \rightarrow \infty$,
		\begin{align*}
		\Pr \left( \left|{\bm{\theta}}^{(h+1)}\star X_{t-h-1} - 0\right| > \xi \right) &= 
		\Pr \left({\bm{\theta}}^{(h+1)}\star X_{t-h-1} > \xi \right)\\
		&{\stackrel{\text{Markov}}{\leq}} \; 
		\frac{E\left({\bm{\theta}}^{(h+1)}\star X_{t-h-1}\right)}{\xi} = \frac{\mu\alpha^{h+1}}{\xi} \rightarrow 0.
		\end{align*}
		
		Therefore, ${\bm{\theta}}^{(h+1)}\star X_{t-h-1} \cp 0$ as $h\rightarrow\infty$. On the other hand, 
		$$\varepsilon_{t}^{(h+1)} = \sum\limits_{j=0}^{h}{\bm{\theta}}^{(j)}\star\varepsilon_{t - j}  \cd \sum\limits_{j=0}^{\infty}{\bm{\theta}}^{(j)}\star\varepsilon_{t - j} \stackrel{d}{=} X_{t}, \quad \mbox{as}\,\, h\rightarrow\infty.$$ 
		
		To see this, note that $\varepsilon_{t}^{(h+1)} \sim \text{Geo}\left((1-\alpha^{h+1})\mu\right)$. Thus, for all $y\in\mathbb N$,
		
		\begin{align*}
		\Pr\left(\varepsilon_{t}^{(h+1)}\leq y\right) &= 1 - \left( \frac{(1-\alpha^{h+1})\mu}{1+(1-\alpha^{h+1})\mu} \right)^{y+1} \Rightarrow \\
		\lim_{h \rightarrow \infty} \Pr\left(\varepsilon_{t}^{(h+1)}\leq y\right) &= 1 - \left(\frac{\mu}{1+\mu}\right)^{y+1} = P(X_t\leq y).
		\end{align*}

        With the above results, we derive the desired MA($\infty$) representation of the Geo-INAR(1) process.

\end{proof}

\begin{remark}
Proposition \ref{prop:DGINARasMAinfty} guarantees that the proposed Geo-INAR(1) process can be represented as a kind of MA$(\infty)$ process like the Gaussian AR(1) and PINAR(1) processes.
\end{remark}

In the following proposition, we present the transition probabilities $1$-step ahead, $p_{ij} \equiv P(X_t = j | X_{t-1} = i)$, $i$, $j \in \Nset$, and also expressions for the conditional mean and variance, besides the autocorrelation function of the Geo-INAR(1) model. 

\begin{restatable}{proposition}{proptransitionlabel}
	\label{prop:transition} 
	Let $\{X_t\}_{t\in\mathbb N}$ be a Geo-INAR(1) process. Then, for $\mu>0$ and $\alpha \in (0, \mu/(1+\mu))$:
	\begin{enumerate}
		\item [i.] $\forall j\in\Nset$,
		\begin{equation*}
		p_{ij} = \begin{cases}
		p_{\varepsilon}(j), i = 0\\
		p_{\varepsilon}(j)\displaystyle{\left[\pi_{\star}^{i} + {\sum_{m=1}^{j}}{\sum_{l=1}^{m}}{\left(1 + \frac{1}{\mu_\varepsilon}\right)^{m}\mbox{A}_{l}^{i}(\pi_{\star})\mbox{B}_{l}^{m}(p)}\right]}, i\geq1.
	    \end{cases}
	    \end{equation*} 
	Here, $p_{\varepsilon}(\cdot)$ is the pmf of the innovations with
	$$\pi_{\star} = 1 - \dfrac{\alpha}{1+\mu_{\varepsilon}}\; \text{ and }\; p = \dfrac{\mu_{\varepsilon} - \alpha}{1 + \mu_{\varepsilon} - \alpha}.$$
	
	$\mbox{A}_{l}^{i}(\pi_{\star})$ and $\mbox{B}_{l}^{m}(p)$ are defined in Proposition \ref{prop:sumzmg_cond}.
	\item [ii.] $E(X_{t+1}|X_t) =  \alpha X_t + (1-\alpha)\mu$.
	\item [iii.] $Var(X_{t+1}|X_t) =  \left[ (1+2\mu)(1-\alpha)\alpha \right]X_t + \sigma_{\varepsilon}^{2}$.
	\item [iv.]The autocorrelation function is $\rho_h \equiv corr(X_{t+h},X_t)=\alpha^{h}$, for $h\in\Nset$.
	 	\end{enumerate}
\end{restatable}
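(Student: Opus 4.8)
The plan is to read every quantity off the defining recursion $X_{t+1}=\bm{\theta}\star X_t+\varepsilon_{t+1}$ together with the representations and moment identities already established, so that nothing beyond routine bookkeeping is required. For part (i) I would condition on $X_{t-1}=i$ and use that $\bm{\theta}\star X_{t-1}$ and $\varepsilon_t$ are independent, so that $p_{ij}$ is the convolution $p_{ij}=\sum_{k=0}^{j}P(\bm{\theta}\star X_{t-1}=k\mid X_{t-1}=i)\,p_\varepsilon(j-k)$. When $i=0$ only the $k=0$ term survives (since $\bm{\theta}\star 0=0$ a.s.), giving $p_{0j}=p_\varepsilon(j)$. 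When $i\geq1$ I would substitute the ZMNB conditional pmf from Proposition~\ref{prop:sumzmg_cond}; this is exactly where the hypothesis $0<\alpha<\mu/(1+\mu)$ enters, since it is what makes that law proper (in particular $p,\pi_{\star}\in(0,1)$). Because the innovations are $\text{Geo}(\mu_{\varepsilon})$, the ratio of innovation masses telescopes as $p_\varepsilon(j-k)/p_\varepsilon(j)=(1+1/\mu_{\varepsilon})^{k}$; factoring $p_\varepsilon(j)$ out of the convolution and relabelling the outer summation index then reproduces the stated formula. The one delicate point is aligning the double sum coming from the ZMNB pmf with the $(1+1/\mu_{\varepsilon})^{m}$ weights so that the index ranges match exactly.

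Parts (ii) and (iii) follow from additivity of conditional expectation and --- using that $\varepsilon_{t+1}$ is independent of $X_t$ --- additivity of conditional variance: $E(X_{t+1}\mid X_t)=E(\bm{\theta}\star X_t\mid X_t)+E(\varepsilon_{t+1})$ and $Var(X_{t+1}\mid X_t)=Var(\bm{\theta}\star X_t\mid X_t)+Var(\varepsilon_{t+1})$. By Proposition~\ref{prop:sumzmg_cond} the conditional moments of the operator are $E(\bm{\theta}\star X_t\mid X_t)=X_t\,E(G)=\alpha X_t$ and $Var(\bm{\theta}\star X_t\mid X_t)=X_t\,Var(G)=(1+2\mu)(1-\alpha)\alpha\,X_t$, while $E(\varepsilon_{t+1})=\mu_{\varepsilon}=(1-\alpha)\mu$ and $Var(\varepsilon_{t+1})=\sigma_{\varepsilon}^{2}$; substituting yields (ii) and (iii) at once.

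For part (iv) I would argue by recursion on $h$. Writing $X_{t+h}=\bm{\theta}\star X_{t+h-1}+\varepsilon_{t+h}$ and using that $\varepsilon_{t+h}$ is independent of $X_t$, together with $E\big((\bm{\theta}\star X_{t+h-1})X_t\big)=E(G)\,E(X_{t+h-1}X_t)$ from Proposition~\ref{prop:basic}(iii) (the counting series being independent of both $X_{t+h-1}$ and $X_t$), one obtains $Cov(X_{t+h},X_t)=\alpha\,Cov(X_{t+h-1},X_t)$. The base case $h=1$ is Proposition~\ref{prop:basic}(v): $Cov(X_{t+1},X_t)=Cov(\bm{\theta}\star X_t,X_t)=E(G)\,Var(X_t)=\alpha\sigma^{2}$. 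Iterating gives $Cov(X_{t+h},X_t)=\alpha^{h}\sigma^{2}$, and dividing by $Var(X_{t+h})=Var(X_t)=\sigma^{2}$ (stationarity) gives $\rho_h=\alpha^{h}$. Alternatively, the covariance can be obtained in one step from $X_{t+h}\iguald\bm{\theta}^{(h)}\star X_t+\varepsilon_t^{(h)}$ (Proposition~\ref{theo:igualD}), since $\varepsilon_t^{(h)}$ is built from innovations independent of $X_t$, by combining $E(G^{(h)})=\alpha^{h}$ (Proposition~\ref{theo:prodZMG}) with Proposition~\ref{prop:basic}(v).

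The main obstacle in the whole proposition is the bookkeeping in part (i) --- carrying the convolution, the double sum from the ZMNB pmf, and the geometric ratio through so that the final index structure matches the statement; parts (ii)--(iv) are short consequences of moment identities already proved.
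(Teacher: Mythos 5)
Your proposal is correct and follows essentially the same route as the paper: part (i) is the same convolution-plus-geometric-ratio argument (the paper writes $p_\varepsilon(j-m)/p_\varepsilon(j)=(1+1/\mu_\varepsilon)^m$ and factors out $p_\varepsilon(j)$ exactly as you do), and parts (ii)--(iv) rest on the same moment identities, with your use of conditional-variance additivity in (iii) being only a cosmetic shortcut for the paper's direct second-moment expansion. Your recursion $Cov(X_{t+h},X_t)=\alpha\,Cov(X_{t+h-1},X_t)$ in (iv) simply spells out what the paper dismisses as ``routine calculations,'' and it is valid.
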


It is possible to generalize this results. For the conditional mean and conditional variance, we use induction methods to obtain the recurrence relations:
\begin{itemize}
    \item[] $E(X_{t+h}|X_t) = E(G)E(X_{t+(h-1)}|X_t) + \mu_\varepsilon,$
    \item[] $Var(X_{t+h}|X_t) = Var(G)E(X_{t+(h-1)}|X_t) + E^{2}(G)Var(X_{t+(h-1)}|X_t) + \sigma_{\varepsilon}^{2}$
\end{itemize}

Then, solving the difference equations we can generalize the $h$-steps ahead conditional moments as follow:  
\begin{itemize}
	\item [] $\begin{aligned}[t] \phantom{=} E(X_{t+h}|X_t) &= \alpha^{h}X_t + \dfrac{1 - \alpha^{h}}{1 - \alpha}\mu_\varepsilon,
	\end{aligned}$
	\item [] $\begin{aligned}[t] &\phantom{=} Var(X_{t+h}|X_t) = (1+2\mu)\left[\alpha^{h+1}(1-\alpha^{h-1}) + \alpha^{2h-1}(1-\alpha)	\right]X_t + \\[7pt]
	&\phantom{=}\quad	(1+2\mu)\left[\alpha^{h+1}\dfrac{1-\alpha^{h-1}}{1-\alpha} + \alpha\dfrac{1-\alpha^{2h-2}}{1-\alpha^{2}}\right]\mu_\varepsilon + \left[\alpha^{2h-2} + \dfrac{1-\alpha^{2h-2}}{1-\alpha^{2}}\right]\sigma_{\varepsilon}^{2}.
	\end{aligned}$
\end{itemize}

Note that in the expressions above, when $h \rightarrow \infty$, we have that
\begin{enumerate}
	\item [] $\begin{aligned}[t] \phantom{=} E(X_{t+h}|X_t) \; &\rightarrow \; 0 X_t + \dfrac{1}{1-\alpha}\mu_\varepsilon = \mu = E(X_t),
	\end{aligned}$
	
	\item [] $\begin{aligned}[t] \phantom{=} Var(X_{t+h}|X_t) \; &\rightarrow \; (1+2\mu)0X_t + (1+2\mu)\dfrac{\alpha}{1-\alpha^2}\mu_\varepsilon + \dfrac{1}{1-\alpha^2}\sigma_{\varepsilon}^{2} \\[5pt]
	&= \dfrac{1}{1-\alpha^2}(1-\alpha)(1+\alpha)\mu(1+\mu) = \mu(1+\mu) = Var(X_t)
	\end{aligned}$
\end{enumerate}
which are, as expected, the unconditional mean and unconditional variance, respectively.

The transition probabilities $h$-steps ahead are essential to obtain coherent forecast.  In the context of integer-valued time series, if we use the conditional mean to obtain predictions, the predicted values may not belong to the parametric space. In this sense, these predictions are not coherent. To get predicted values $h$-steps ahead, we need to consider the conditional pmf $h$-steps ahead $P(X_{t+h} = j|X_{t} = i)$, $j=0$, $1$, $2$, $\ldots$, for a fixed $i$, and then use its mode or median as point forecast. However, most of the time it is hard to obtain closed forms of this pmf, as stated by \citep{weiss18}. For our Geo-INAR(1) process we have the closed forms given by Proposition \ref{prop:transHahead}, a generalization of Proposition \ref{prop:transition} - item (i).

\begin{restatable}{proposition}{transHaheadlabel}
	\label{prop:transHahead} 
	The transition probabilities of the Geo-INAR(1) process $h$-steps ahead, $\forall j \in \Nset$, are:
	\begin{equation*}
	p_{ij}^{(h)} = \begin{cases}
	p_{\varepsilon^{(h)}}(j), i = 0\\
	p_{\varepsilon^{(h)}}(j)\displaystyle{ \left[
	\left(\pi_{\star}^{(h)}\right)^{i} + \sum_{m=1}^{j}\sum_{l=1}^{m}{\left(1 + \frac{1}{\mu_\varepsilon^{(h)}}\right)^{m}\mbox{A}_{l}^{i}\left(\pi_{\star}^{(h)}\right)\mbox{B}_{l}^{m}\left(p^{(h)}\right)}\right]}, i \geq 1,
	\end{cases}
	\end{equation*}
    \noindent where $p_{\varepsilon^{(h)}}(\cdot)$ is the pmf of the random variable $\varepsilon^{(h)}$ defined in Theorem \ref{theo:igualD}. $\mbox{A}_{l}^{i}(\cdot)$ and $\mbox{B}_{l}^{m}(\cdot)$ are defined in Proposition \ref{prop:transition}. Also,
	$$\pi_{\star}^{(h)} = 1 - \frac{\alpha^h}{1+\mu_{\varepsilon}^{(h)}},\;\; p^{(h)} = \frac{\mu_{\varepsilon}^{(h)} - \alpha^h}{1 + \mu_{\varepsilon}^{(h)} - \alpha^h},\;\; \text{and}\;\; \mu_{\varepsilon}^{(h)}=(1-\alpha^h)\mu.$$
\end{restatable}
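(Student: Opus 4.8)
The plan is to reduce the $h$-step transition law to the one-step computation already carried out for Proposition~\ref{prop:transition}(i), using the decomposition of Proposition~\ref{theo:igualD}. First I would fix $i,j\in\Nset$ and condition on $X_t=i$. Iterating the defining recursion~\eqref{eq:process} $h$ times (as in the proofs of Propositions~\ref{theo:igualD} and \ref{prop:DGINARasMAinfty}) and using the additivity of the $\star$ operator, one obtains $X_{t+h}\iguald \bm{\theta}^{(h)}\star X_t+\varepsilon_t^{(h)}$, with the two summands independent: $\varepsilon_t^{(h)}$ is assembled from $\varepsilon_{t+1},\dots,\varepsilon_{t+h}$ and fresh counting series, all independent both of $X_t$ and of the counting series driving $\bm{\theta}^{(h)}\star X_t$. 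Conditioning on $X_t=i$ and convolving therefore gives, for $i\ge1$,
\[
p_{ij}^{(h)}=\sum_{k=0}^{j}P\!\left(\bm{\theta}^{(h)}\star i=k\right)\,p_{\varepsilon^{(h)}}(j-k),
\]
while for $i=0$ the convention $\bm{\theta}^{(h)}\star 0=0$ immediately yields $p_{0j}^{(h)}=p_{\varepsilon^{(h)}}(j)$.

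Next I would identify the two ingredient laws. By Proposition~\ref{theo:igualD}, $\varepsilon^{(h)}\sim\text{Geo}(\mu_{\varepsilon}^{(h)})$ with $\mu_{\varepsilon}^{(h)}=(1-\alpha^h)\mu$, so its pmf obeys the elementary identity $p_{\varepsilon^{(h)}}(j-k)=p_{\varepsilon^{(h)}}(j)\left(1+1/\mu_{\varepsilon}^{(h)}\right)^{k}$ for every $0\le k\le j$, which is precisely the range of summation. By Proposition~\ref{theo:prodZMG}, $\bm{\theta}^{(h)}\star i=\sum_{\ell=1}^{i}G_{\ell}^{(h)}$ with the $G_{\ell}^{(h)}$ iid $\text{ZMG}$ carrying the ``effective'' parameter $\alpha^h$ in place of $\alpha$; hence $\bm{\theta}^{(h)}\star i$ is a sum of $i$ iid ZMG variables, which is exactly the object whose pmf is computed in Proposition~\ref{prop:sumzmg_cond}. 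Applying that proposition with $\alpha$ replaced by $\alpha^h$ --- the hypothesis remains valid since $0<\alpha^h\le\alpha<\mu/(1+\mu)$ --- gives $P(\bm{\theta}^{(h)}\star i=0)=(\pi_{\star}^{(h)})^{i}$ and $P(\bm{\theta}^{(h)}\star i=m)=\sum_{\ell=1}^{m}\mbox{A}_{\ell}^{i}(\pi_{\star}^{(h)})\,\mbox{B}_{\ell}^{m}(p^{(h)})$ for $m\ge1$, with $\pi_{\star}^{(h)}$, $p^{(h)}$ and $\mu_{\varepsilon}^{(h)}$ as displayed in the statement.

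Finally I would substitute these two facts into the convolution, factor out the common term $p_{\varepsilon^{(h)}}(j)$, split off the $k=0$ summand (which contributes $(\pi_{\star}^{(h)})^{i}\cdot(1+1/\mu_{\varepsilon}^{(h)})^{0}=(\pi_{\star}^{(h)})^{i}$), and relabel $k=m\ge1$ in the remaining sum; this reproduces verbatim the bracketed expression in the statement. The algebra is identical to the $h=1$ case, so I would point to the proof of Proposition~\ref{prop:transition}(i) rather than repeat it. I do not expect a genuine obstacle: the only points requiring care are verifying the independence of $\bm{\theta}^{(h)}\star X_t$ and $\varepsilon_t^{(h)}$ so that the convolution formula is legitimate, checking that Proposition~\ref{prop:sumzmg_cond} and the geometric ratio identity are invoked only for the admissible parameter $\alpha^h$ and on the admissible range $0\le k\le j$, and matching the ZMG parameters of $G^{(h)}$ against the $(h)$-decorated quantities $\pi_{\star}^{(h)}$, $p^{(h)}$, $\mu_{\varepsilon}^{(h)}$ in the statement.
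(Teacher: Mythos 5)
Your proposal is correct and follows essentially the same route as the paper's own proof, which simply invokes the decomposition $X_{t+h}\iguald\bm{\theta}^{(h)}\star X_t+\varepsilon_t^{(h)}$ and notes that the $h=1$ convolution argument goes through verbatim with $\alpha^h$ in place of $\alpha$. Your write-up is in fact more careful than the paper's, explicitly checking the independence needed for the convolution, the admissibility of $\alpha^h$ in Proposition~\ref{prop:sumzmg_cond}, and the identification of the $(h)$-decorated parameters.
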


\begin{proof}
Based on the fact that
$$\left(X_t, X_{t-h} \right)  \iguald \left(\bm{\theta}^{(h)}\star X_{t-h} + \sum_{k=0}^{h-1}{\bm{\theta}^{(k)}\star \varepsilon_{t-j}}, X_{t-h} \right).$$
The result follows by using the same steps of the case where $h=1$. It is exactly the same calculations but with $\alpha^h$ replacing $\alpha$.
\end{proof}

\begin{restatable}{proposition}{propconditionalpgflabel}
	\label{prop:conditionalpgf}
	The conditional pgf of the Geo-INAR(1) process is
	\begin{align*}
	E\left( s^{X_{t+h}}\big| X_t \right) =
	&\Psi_{X}(s)\left[ \Psi_{X}\left( \frac{1 + \left[(1-\alpha^{h-1})\mu - \alpha^{h}\right](1-s)}{1 + (1-\alpha^{h-1})\mu(1-s)} \right) \right]^{-1}\\
	&\times\left( \frac{1 + \left[(1-\alpha^{h-1})\mu - \alpha^{h}\right](1-s)}{1 + (1-\alpha^{h-1})\mu(1-s)} \right)^{X_t}.
	\end{align*}
\end{restatable}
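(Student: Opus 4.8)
The plan is to read the conditional pgf straight off the stochastic representation in Proposition~\ref{theo:igualD}, $X_{t+h}\iguald \bm{\theta}^{(h)}\star X_t+\varepsilon_t^{(h)}$, and then recast the answer into the quotient form displayed in the statement. Three ingredients are needed: (a) the innovation block $\varepsilon_t^{(h)}$ is independent of $X_t$ with $\varepsilon_t^{(h)}\sim\mathrm{Geo}\big((1-\alpha^h)\mu\big)$, as established inside the proof of Proposition~\ref{theo:igualD}; (b) by the discussion preceding Proposition~\ref{theo:prodZMG}, $\bm{\theta}^{(h)}\star X_t=\sum_{j=1}^{X_t}G_j^{(h)}$ with $\{G_j^{(h)}\}$ iid $\mathrm{ZMG}$ and independent of $X_t$; and (c) $\Psi_X$ is invertible, with $\Psi_X$ and $\Psi_X^{-1}$ as in Eq.~\eqref{eq:pgfx}.

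First I would condition on $X_t$. By (a), (b) and the independence of the counting series from everything else, the conditional pgf factorises as
\[
E\!\left(s^{X_{t+h}}\mid X_t\right)=\big[\Psi_{G^{(h)}}(s)\big]^{X_t}\,\Psi_{\varepsilon^{(h)}}(s).
\]
Next I would make both factors explicit: from the $\mathrm{ZMG}$ pgf in Eq.~\eqref{eq:pgfzmg} with the parameters of Proposition~\ref{theo:prodZMG} (equivalently, by iterating Eq.~\eqref{eq:pgfzmg}), $\Psi_{G^{(h)}}(s)=\dfrac{1+[(1-\alpha^h)\mu-\alpha^h](1-s)}{1+(1-\alpha^h)\mu(1-s)}=:\phi_h(s)$, and $\Psi_{\varepsilon^{(h)}}(s)=\dfrac{1}{1+(1-\alpha^h)\mu(1-s)}$ from Eq.~\eqref{eq:pgfe}. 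The remaining step is to turn the $\mathrm{Geo}$ pgf $\Psi_{\varepsilon^{(h)}}$ into the quotient $\Psi_X(s)\big/\Psi_X(\phi_h(s))$ appearing in the claim; this is a one-line computation: $1-\phi_h(s)=\dfrac{\alpha^h(1-s)}{1+(1-\alpha^h)\mu(1-s)}$, so by Eq.~\eqref{eq:pgfx}
\[
\Psi_X(\phi_h(s))=\frac{1}{1+\mu\big(1-\phi_h(s)\big)}=\frac{1+(1-\alpha^h)\mu(1-s)}{1+\mu(1-s)}=\frac{\Psi_X(s)}{\Psi_{\varepsilon^{(h)}}(s)},
\]
hence $\Psi_{\varepsilon^{(h)}}(s)=\Psi_X(s)\big[\Psi_X(\phi_h(s))\big]^{-1}$; substituting back into the factorisation gives the stated identity.

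There is no genuine difficulty here — the argument is bookkeeping on top of Proposition~\ref{theo:igualD}. The only point that needs care is the conditioning in the first display: that, given $X_t=x$, the block $\bm{\theta}^{(h)}\star X_t$ is exactly a sum of $x$ iid $\mathrm{ZMG}$ variables independent of the innovation block, which is precisely what the representation preceding Proposition~\ref{theo:prodZMG} together with the independence built into Eq.~\eqref{eq:process} provide. As a cross-check I would also record the alternative route via the Markov property and induction on $h$, namely $E(s^{X_{t+h}}\mid X_t)=\Psi_\varepsilon(s)\,E\big(\Psi_G(s)^{X_{t+h-1}}\mid X_t\big)$, which avoids Proposition~\ref{theo:igualD} and makes transparent that $\Psi_{G^{(h)}}$ is the $h$-fold composition of $\Psi_G$. (Setting $h=1$ in either argument must return $E(s^{X_{t+1}}\mid X_t)=\Psi_G(s)^{X_t}\Psi_\varepsilon(s)$, which suggests the exponent $h-1$ in the displayed statement should read $h$.)
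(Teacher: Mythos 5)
Your argument is correct and is essentially the paper's own route: the paper conditions step by step to reach $E\bigl(s^{X_{t+h}}\mid X_t\bigr)=\prod_{i=0}^{h-1}\Psi_{\varepsilon}\bigl(\Psi_G^{(i)}(s)\bigr)\cdot\bigl(\Psi_G^{(h)}(s)\bigr)^{X_t}$ and then telescopes the product into $\Psi_X(s)\bigl[\Psi_X\bigl(\Psi_G^{(h)}(s)\bigr)\bigr]^{-1}$ via Eq.~\eqref{eq:pgfgeneral1}, whereas you read the same factorisation directly off Proposition~\ref{theo:igualD} and verify the quotient identity by hand; the two differ only in bookkeeping.

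Your closing remark deserves emphasis: the exponent discrepancy you flag is real. The paper's own Lemma~\ref{lemma.comp} (and Proposition~\ref{theo:prodZMG}) give the $h$-fold composition as the pgf of a $\mathrm{ZMG}\bigl(1-\alpha^h/((1-\alpha^h)\mu),\,(1-\alpha^h)\mu\bigr)$ law, i.e.
\begin{equation*}
\Psi_G^{(h)}(s)=\frac{1+\bigl[(1-\alpha^{h})\mu-\alpha^{h}\bigr](1-s)}{1+(1-\alpha^{h})\mu(1-s)},
\end{equation*}
with $(1-\alpha^{h})\mu$, not $(1-\alpha^{h-1})\mu$, in both places. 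The version printed in Proposition~\ref{prop:conditionalpgf} (and in the ``by induction'' display inside its proof) fails already at $h=1$, where it returns the Bernoulli pgf $1-\alpha(1-s)$ instead of $\Psi_G(s)$, and is inconsistent with $E\bigl(s^{X_{t+1}}\mid X_t\bigr)=\Psi_G(s)^{X_t}\Psi_\varepsilon(s)$. Your computation showing $\Psi_X(s)\bigl[\Psi_X(\phi_h(s))\bigr]^{-1}=\Psi_{\varepsilon^{(h)}}(s)$ with the corrected $\phi_h$ confirms that the statement holds once $\alpha^{h-1}$ is replaced by $\alpha^{h}$.
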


We now are going to discuss about the joint distribution of $(X_{t},X_{t-1})$. For a discrete bivariate random vector $(Z_{1}, Z_{2})$, its probability generating function is given by 
$$\Psi_{Z_{1}, Z_{2}}(s_{1}, s_{2}) = E\left(s_{1}^{Z_{1}} s_{2}^{Z_{2}} \right),$$
$s_1$ and $s_2$ belonging to some interval containing the value 1.
The joint pgf of $(X_{t}, X_{t-1})$ can be expressed by
$$\Psi_{X_{t}, X_{t-1}}(s_{1}, s_{2}) = \Psi_{\varepsilon}(s_{1})\Psi_{X}\left(s_{2}\Psi_{G}(s_{1})\right).$$

After some algebra, we obtain
\begin{align}
\Psi_{X_{t}, X_{t-1}}(s_{1}, s_{2}) &= \dfrac{\Psi_{\varepsilon}(s_{1})}{1 + \mu(1 - s_{2}\Psi_{G}(s_{1}))}\nonumber\\
                                    &= \dfrac{1}{1 + \mu\left[ (1-s_{1}) + (1 - s_{2}) + (\mu_{\varepsilon} - \alpha)(1-s_{1})(1-s_{2}) \right]},
\label{eq:jointpgf}
\end{align}
which is a pgf of a bivariate geometric distribution with parameters $c_1 = c_2 = \mu$ and $\gamma^{2} = \frac{\mu}{1+\mu}\alpha$ as given by \citep{jayakumar}. We denote $(X_{t}, X_{t-1}) \sim \text{BGD}(c_1, c_2, \gamma^{2})$. 

With the above result, we get an important stochastic property of our geometric INAR(1) process as follows.

\begin{proposition}
	The Geo-INAR(1) process is time-reversible.
\end{proposition}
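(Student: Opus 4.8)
The plan is to reduce time-reversibility of the whole process to a symmetry property of the two-dimensional marginal distribution, and then read that symmetry directly off the joint pgf in Eq.~\eqref{eq:jointpgf}. Recall that a stationary process $\{X_t\}_{t\in\mathbb N}$ is \emph{time-reversible} if $(X_{t_1},\dots,X_{t_n})\iguald(X_{t_n},\dots,X_{t_1})$ for every $n$ and every choice of times. For a \emph{stationary Markov chain} this global condition is equivalent to the single ``detailed balance'' identity $P(X_t=i,X_{t+1}=j)=P(X_t=j,X_{t+1}=i)$ for all $i,j\in\mathbb N$, i.e. to $(X_t,X_{t+1})\iguald(X_{t+1},X_t)$. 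Since the Geo-INAR(1) process is Markovian by definition and stationary by construction (geometric marginal fixed throughout), it suffices to establish this pairwise symmetry.

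First I would record the reduction explicitly: writing $\pi_i=P(X_t=i)$ and $p_{ij}=P(X_{t+1}=j\mid X_t=i)$, the finite-dimensional law of a path factors as $P(X_0=i_0,\dots,X_n=i_n)=\pi_{i_0}\prod_{k=0}^{n-1}p_{i_k i_{k+1}}$; using $\pi_{i_k}p_{i_k i_{k+1}}=\pi_{i_{k+1}}p_{i_{k+1}i_k}$ repeatedly telescopes this to $\pi_{i_n}\prod_{k=0}^{n-1}p_{i_{k+1}i_k}$, which is exactly the probability of the reversed path. Hence detailed balance, equivalently symmetry of the bivariate distribution of $(X_t,X_{t+1})$, propagates to all finite-dimensional distributions, giving full time-reversibility. (This is the standard argument and can be stated in a couple of lines rather than proved from scratch.)

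Next I would invoke Eq.~\eqref{eq:jointpgf}: the joint pgf of $(X_t,X_{t-1})$ is
\[
\Psi_{X_t,X_{t-1}}(s_1,s_2)=\dfrac{1}{1+\mu\bigl[(1-s_1)+(1-s_2)+(\mu_\varepsilon-\alpha)(1-s_1)(1-s_2)\bigr]},
\]
which is manifestly invariant under interchanging $s_1$ and $s_2$. Since a discrete bivariate distribution is determined by its pgf on a neighbourhood of $(1,1)$, symmetry of the pgf yields $(X_t,X_{t-1})\iguald(X_{t-1},X_t)$, i.e. the detailed balance identity. Combined with the reduction above and stationarity, this proves the Geo-INAR(1) process is time-reversible.

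The only genuinely delicate point is the reduction step — making sure the chain is both stationary and Markov so that pairwise symmetry actually upgrades to full reversibility; once that is in place, the computation is trivial because the symmetry of \eqref{eq:jointpgf} in $(s_1,s_2)$ is visible by inspection. A secondary check worth noting is that the derivation of \eqref{eq:jointpgf} already used $\Psi_{X_t,X_{t-1}}(s_1,s_2)=\Psi_{\varepsilon}(s_1)\Psi_X\bigl(s_2\Psi_G(s_1)\bigr)$, which is \emph{not} symmetric on its face; the symmetry emerges only after substituting the explicit geometric and ZMG pgf's and simplifying, so I would make sure that algebraic step is stated cleanly (it is the content of \eqref{eq:jointpgf} itself).
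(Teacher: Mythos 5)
Your proof is correct and follows essentially the same route as the paper: the paper likewise deduces time-reversibility from the symmetry of the joint pgf in Eq.~\eqref{eq:jointpgf} under the interchange of $s_1$ and $s_2$. The only difference is that you explicitly spell out the reduction from detailed balance (pairwise symmetry) to full reversibility for a stationary Markov chain, a step the paper leaves implicit; this is a welcome clarification but not a different argument.
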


\begin{proof}
The time-reversibility of the Geo-INAR(1) process follows from the symmetry in $s_{1}$ and $s_{2}$ of the joint pgf in Eq. \eqref{eq:jointpgf}.
\end{proof}

\begin{remark}
By the time-reversibility of the Geo-INAR(1) process, the future and the past can be ``swapped'', that is, from an inferential point of view, one can analyze data from past to future or from future to past to obtain the same results. The time-reversibility property improves the estimation procedures, as can be seen in \citep{Annis10}. It is worth mentioning that this property is extremely rare on INAR processes. To the best of our  knowledge, only the existing Poisson INAR(1) process is time-reversible among the INAR ones. We call attention that in the Poisson case, both marginals and innovations are in the same family of distributions (Poisson) as in our geometric model. We conjecture that time-reversibility is an innate feature of INAR(1) processes constructed under our new proposed perspective. We hope to investigate this conjecture in future research.
\end{remark} 

\section{Estimation and Monte Carlo simulation}\label{chap:estimation}

In this section, we provide three methods for estimating the parameters of our proposed Geo-INAR(1) process. Let $\bm{\theta} = (\mu, \alpha)^{T}$ be the parameter vector and assume that $\{ X_t = x_t \}_{t=1}^{n}$ is a realization of the Geo-INAR(1) process, where $n$ stands for the sample size. Also, we present Monte Carlo simulation studies to compare the performance of the proposed estimation methods.

\subsection{Conditional Least Squares estimators}

We begin by discussing estimation by Conditional Least Squares (CLS) method, which consists in minimizing the function 
$$Q_{n}(\bm{\theta}) = \sum_{t=2}^{n}{\left( X_t - E(X_{t}|X_{t-1}) \right)^{2}} = \sum_{t=2}^{n}{\left( X_t - \alpha X_{t-1} - (1-\alpha)\mu  \right)^{2}},$$
with respect to $\mu$ and $\alpha$.

By solving the system of equations $\partial Q_{n}(\bm{\theta})/\partial \bm{\theta} = 0$, we obtain the following estimators

\begin{align*}
\widehat{\alpha}_{cls} &= \dfrac{\displaystyle\sum_{t=2}^{n}{X_t X_{t-1}} - \dfrac{1}{n-1}\displaystyle\sum_{t=2}^{n}{X_t}\displaystyle\sum_{t=2}^{n}{X_{t-1}}}{\displaystyle\sum_{t=2}^{n}{X_{t-1}^{2}}-\dfrac{1}{n-1}\left(\displaystyle\sum_{t=2}^{n}{X_{t-1}}\right)^2}, \text{ and}\\
\widehat{\mu}_{cls} &= \dfrac{\displaystyle\sum_{t=2}^{n}{X_t}-\widehat{\alpha}_{cls}\displaystyle\sum_{t=2}^{n}{X_{t-1}}}{(n-1)(1-\widehat{\alpha}_{cls})}. 
\end{align*}

Proposition \ref{cls_prop} exhibits the asymptotic distribution of the CLS estimators.
\begin{restatable}{prop}{clsproplabel}
	\label{cls_prop} 
	Let $\widehat{\bm{\theta}}_{cls} = (\widehat{\mu}_{cls}, \widehat{\alpha}_{cls})^{\top}$ be the CLS estimators of the Geo-INAR(1) process. Then, $\widehat{\bm{\theta}}_{cls}$ is strongly consistent for $\bm{\theta}$ and its asymptotic distribution is given by 
	$$
	\sqrt{n}\left[ 
	\begin{array}{c}
	\widehat{\mu}_{cls} - \mu \\
	\widehat{\alpha}_{cls} - \alpha
	\end{array}
	\right] \cd 
	\text{N}\left(\left[
	\begin{array}{c}
	0 \\
	0 
	\end{array}
	\right],
	\left[
	\begin{array}{cc}
	\dfrac{\mu(1+\mu)(1+\alpha)}{1-\alpha} & (1+2\mu)\alpha \\
	(1+2\mu)\alpha                         & \dfrac{(1+\mu+2\mu)\sigma_{G}^{2} + \sigma_{\varepsilon}^{2}}{\mu(1+\mu)} 
	\end{array}
	\right]
	\right),
	$$
	where $\sigma_{G}^{2} = Var(G) = (1+2\mu)(1-\alpha)\alpha$.
\end{restatable}

\subsection{Yule-Walker estimators}

Yule-Walker (YW) estimators, which are based on the method of moments, are obtained by solving the Yule-Walker equations, which estimate the parameters through the sample autocorrelation and moments. We obtain an analytic expression for $\widehat{\alpha}_{yw}$ as function of the first-autocorrelation, thus, since $\alpha = corr(X_t,X_{t-1})$ and by using the fact $\mu = E(X)$, we obtain
\begin{align*}
\widehat{\alpha}_{yw} &= \frac{\displaystyle\sum_{t=2}^{n}{(X_t - \bar{X})(X_{t-1}-\bar{X})}}{\displaystyle\sum_{t=1}^{n}{(X_t - \bar{X})^2}}, \text{ and }\\ 
\widehat{\mu}_{yw} &= \bar{X} = \frac{1}{n}\displaystyle\sum_{t=1}^{n}{X_t}.
\end{align*}

Under our Geo-INAR(1) process, the Yule-Walker and CLS estimators are asymptotically equivalent. Since in our Geo-INAR(1) process the Yule-Walker and CLS estimators are exactly the same form of the Poisson INAR(1) process considering the alternative parameterization proposed by \citep{Joe96}, we can use directly Theorem 3 from \citep{Freeland05} to ensure this result.

\subsection{Maximum likelihood estimators}

The Maximum Likelihood (ML) estimators are obtained by maximizing the log-likelihood function. We have that the likelihood function can be written as
$$\mathcal{L}(\bm{\theta}) = P(X_1 = x_1, \ldots, X_n = x_n) = P(X_1 = x_1)\prod_{t=2}^{n}{P(X_t=x_t|X_{t-1}=x_{t-1})},$$
due to the Markovian property of our process. Hence, the log-likelihood function $\ell(\bm{\theta})=\log \mathcal{L}(\bm{\theta})$ is given by
$$\ell(\bm{\theta}) = x_{1}\log\mu - (1-x_{1})\log(1+\mu) + \sum_{t=2}^{n}{ \log p_{x_{t-1}x_{t}} },$$
where $p_{x_{t-1}x_{t}}$ is the transition probabilities provided in Proposition \ref{prop:transition}.

Then, to obtain the maximum likelihood estimators we need to solve the non-linear system of equations $\partial \ell(\bm{\theta})/\partial \bm{\theta} = 0$ through numerical methods implemented in many statistical software.

\subsection{Monte Carlo simulation}

In this section, we present a Monte Carlo simulation study to compare the three estimation methods proposed in the previous subsections for the Geo-INAR process by setting some values of $\mu$ and $\alpha$ and some sample sizes. We compute the root of the mean square error (RMSE) of the estimates given by
$$\text{RMSE}(\bm{\theta}) = \sqrt{\frac{1}{n}\sum_{t=1}^{n}{( \hat{\bm{\theta}}_t - \bm{\theta}_0 )^2}},$$
where $\bm{\theta}_0$ is the true value of the parameter under study and $\widehat{\bm{\theta}}$ is its estimate by a selected estimation method. Note that RMSE is an absolute measure and since we have several scenarios and we wish to make comparisons among them, it is needed to transform the RMSE into a relative measure. For this, we will consider the relative RMSE.

We set up four scenarios to the simulation. In the scenarios (a) to (d), we fixed the value of the mean parameter as $\mu = 5$ and the autocorrelation parameter as $\alpha = 0.1$, $0.3$, $0.5$, $0.7$. For each scenario we simulated different sample sizes ($n = 100$, $300$, $500$, $700$, $1000$) of the Geo-INAR(1) process, thereafter we applied the estimation methods to compare their results. This process was replicated 5000 times, following the calculation of sample mean of the estimates and the relative RMSE.

The results of the simulation study regarding configurations (a)-(d) are presented in Table \ref{tab:5simulacao}. First, we observe consistency of all estimators. On one hand, we can notice that the three estimation methods behave equivalently for $\mu$, the estimates obtained by each method are very close to each other. On the other hand, for $\alpha$, we notice that there is, virtually, no difference between the CLS and YW estimates since they are asymptotically equivalent. While the ML estimates are better than both CLS and YW in all scenarios. Further, for large values of $\alpha$, we note a better performance of the ML estimator over the CLS and YW estimators. In each scenario, the percentage mean improvement of the ML method in terms of relative RMSE for $\alpha$, in relation to the second best estimation method, is: (a) $2.1$\%, (b) $16.5$\%, (c) $26.8$\%, and (d) $34.9$\%.

\begin{table}[!hb]
	\centering
	\caption{Numerical results of the scenarios (a), (b), (c) and (d). The relative RMSE is displayed in parentheses.}
	\begin{adjustbox}{max width=\textwidth}
		\begin{tabular}{rclcl|clcl|clcl}
			\toprule
			$n$   & \multicolumn{2}{c}{$\hat{\mu}_{cls}$} & \multicolumn{2}{c|}{$\hat{\alpha}_{cls}$} & \multicolumn{2}{c}{$\hat{\mu}_{yw}$} & \multicolumn{2}{c|}{$\hat{\alpha}_{yw}$} & \multicolumn{2}{c}{$\hat{\mu}_{ml}$} & \multicolumn{2}{c}{$\hat{\alpha}_{ml}$} \\
			\midrule
			\multicolumn{12}{c}{a) True values: $\mu = 5$ and $\alpha = 0.1$}                             &  \\
			100   &      5.0433  & (0.1242) &      0.1236  & (0.8727) &      5.0426  & (0.1232) &      0.1224  & (0.8618) &      5.0427  & (0.1231) &      0.1325  & (0.9058) \\
			300   &      5.0077  & (0.0711) &      0.1024  & (0.5809) &      5.0081  & (0.0711) &      0.1020  & (0.5789) &      5.0082  & (0.0712) &      0.1049  & (0.5620) \\
			500   &      5.0013  & (0.0536) &      0.0988  & (0.4791) &      5.0012  & (0.0536) &      0.0986  & (0.4783) &      5.0012  & (0.0536) &      0.1002  & (0.4603) \\
			700   &      5.0043  & (0.0465) &      0.0987  & (0.4168) &      5.0041  & (0.0464) &      0.0985  & (0.4163) &      5.0041  & (0.0464) &      0.0998  & (0.3989) \\
			1000  &      4.9949  & (0.0380) &      0.0979  & (0.3659) &      4.9948  & (0.0380) &      0.0978  & (0.3656) &      4.9948  & (0.0380) &      0.0984  & (0.3484) \\
			\midrule
			\multicolumn{12}{c}{b) True values: $\mu = 5$ and $\alpha = 0.3$}                             &  \\
			100   &      5.0010  & (0.1481) &      0.2704  & (0.4002) &      5.0008  & (0.1463) &      0.2674  & (0.3994) &      5.0022  & (0.1463) &      0.2894  & (0.3546) \\
			300   &      5.0041  & (0.0860) &      0.2883  & (0.2504) &      5.0043  & (0.0858) &      0.2874  & (0.2503) &      5.0044  & (0.0858) &      0.2956  & (0.2080) \\
			500   &      4.9989  & (0.0670) &      0.2937  & (0.1922) &      4.9990  & (0.0669) &      0.2932  & (0.1923) &      4.9989  & (0.0668) &      0.2980  & (0.1581) \\
			700   &      4.9952  & (0.0567) &      0.2958  & (0.1637) &      4.9950  & (0.0566) &      0.2953  & (0.1635) &      4.9950  & (0.0566) &      0.2982  & (0.1352) \\
			1000  &      4.9961  & (0.0474) &      0.2972  & (0.1357) &      4.9960  & (0.0474) &      0.2969  & (0.1356) &      4.9960  & (0.0474) &      0.2992  & (0.1094) \\
			\midrule
			\multicolumn{12}{c}{c) True values: $\mu = 5$ and $\alpha = 0.5$}                             &  \\
			100   &      5.0050  & (0.1937) &      0.4534  & (0.2538) &      5.0029  & (0.1914) &      0.4487  & (0.2560) &      5.0015  & (0.1904) &      0.4863  & (0.1938) \\
			300   &      5.0005  & (0.1102) &      0.4839  & (0.1494) &      5.0000  & (0.1098) &      0.4821  & (0.1501) &      5.0002  & (0.1097) &      0.4951  & (0.1098) \\
			500   &      5.0082  & (0.0848) &      0.4902  & (0.1153) &      5.0080  & (0.0846) &      0.4892  & (0.1156) &      5.0079  & (0.0845) &      0.4978  & (0.0840) \\
			700   &      5.0059  & (0.0734) &      0.4927  & (0.0980) &      5.0060  & (0.0733) &      0.4920  & (0.0982) &      5.0062  & (0.0732) &      0.4979  & (0.0706) \\
			1000  &      5.0017  & (0.0601) &      0.4945  & (0.0829) &      5.0018  & (0.0601) &      0.4940  & (0.0829) &      5.0020  & (0.0601) &      0.4987  & (0.0589) \\
			\midrule
			\multicolumn{12}{c}{d) True values: $\mu = 5$ and $\alpha = 0.7$}                             &  \\
			100   &      5.0223  & (0.2625) &      0.6401  & (0.1721) &      5.0149  & (0.2565) &      0.6330  & (0.1773) &      5.0129  & (0.2544) &      0.6847  & (0.1126) \\
			300   &      5.0007  & (0.1517) &      0.6801  & (0.0946) &      5.0008  & (0.1506) &      0.6776  & (0.0958) &      5.0004  & (0.1502) &      0.6954  & (0.0617) \\
			500   &      4.9826  & (0.1162) &      0.6848  & (0.0739) &      4.9821  & (0.1158) &      0.6834  & (0.0745) &      4.9810  & (0.1155) &      0.6958  & (0.0477) \\
			700   &      4.9935  & (0.0979) &      0.6892  & (0.0619) &      4.9924  & (0.0976) &      0.6882  & (0.0623) &      4.9916  & (0.0974) &      0.6973  & (0.0403) \\
			1000  &      5.0022  & (0.0828) &      0.6934  & (0.0512) &      5.0020  & (0.0827) &      0.6926  & (0.0513) &      5.0018  & (0.0826) &      0.6992  & (0.0335) \\
			\bottomrule
		\end{tabular}
	\end{adjustbox}
	\label{tab:5simulacao}
\end{table}

Figures \ref{fig:mu5} displays the boxplots of the estimates to the 5,000 Monte Carlo replicates, for each sample size and for each estimation method, for the parameters $\mu$ and $\alpha$. The horizontal dashed black line represents the true value of the parameters. As expected, we observe that as the sample size increases, the variance decreases and the estimates become more concentrated around the true value of the parameter. Also, that high $\alpha$ values impair the estimation of the $\mu$ parameter. As $\alpha$ increases, the variance of the $\mu$ estimates becomes greater. The first-panel line of Figure \ref{fig:mu5} corroborates what we state about ML method producing better results for $\alpha$ estimation. Note that ML estimates are closer to the true value of the parameter and as the sample size increases its variance decreases and became lower than the variance of CLS and YW estimates.

\begin{figure}[!htp]
	\centering
	\includegraphics[width=\textwidth]{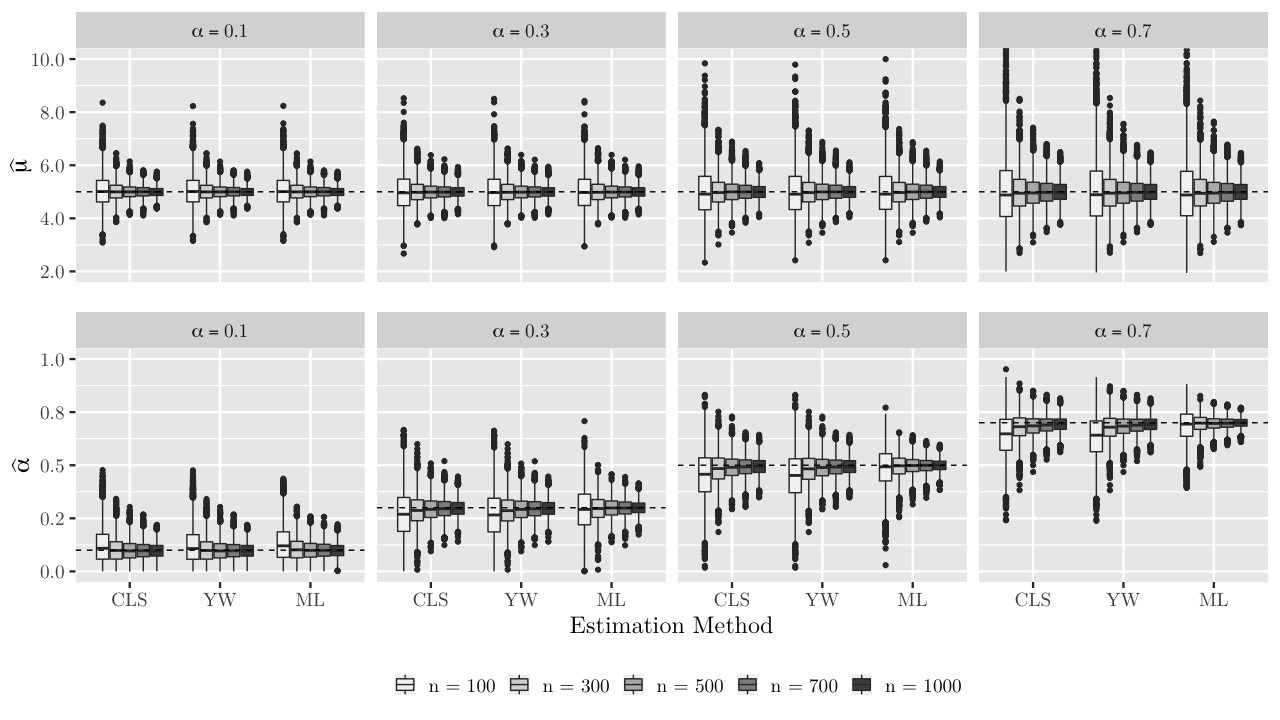}
	\caption{Boxplots of the estimates of the parameters $\mu$ and $\alpha$ based on the CLS, YW and ML estimation methods.}
	\label{fig:mu5}
\end{figure}

\section{Skin lesions data application}\label{chap:data}

In this section, we address the problem of analyzing the bovine skin lesions count time series dataset. The data, first presented in \citep{Jazi12}, consists of animal health laboratory submissions; provided by the Ministry of Agriculture and Forestry from New Zealand. The dataset has monthly submissions to animal health laboratories, from January $2003$ to December $2009$; a total of $n = 84$ observations from a region in New Zealand. The submissions are categorized in various ways: animal type, diseases, health symptoms, etc. Here, we analyze a monthly series giving the total number of bovine cases with skin lesions. 

This study is significant because of potential impact on public health and food supply. The cumulative number of skin lesion has been used as a measure of individual aggressiveness \citep{Turner06} and also as an indicator of the animals well-being in beef cattle facilities \citep{Platz09}. These factors impact animal behavior, which in turn affects the quality of the meat produced. Hence, to keep the number of lesions under control, being able to forecast an increase in the counts accurately, is fundamental in the management of livestock. 

To analyze this data, we shall use our proposed Geo-INAR(1) model. In addition, a comparison to the processes NGINAR(1), PINAR(1), and Poisson INARCH(1) is performed. PINAR(1) and NGINAR(1) processes are natural competitors of Geo-INAR(1) process. PINAR(1) process has marginals and innovations in the same Poisson family of distributions. Despite the parsimony of this model be an advantage, it is inadequate to deal with overdispersion. NGINAR(1) process has geometric marginals and a mixture of geometric distributions for the innovations, allowing to deal with the overdispersion problem. Poisson INARCH(1) belongs to the class of INGARCH models with an ARMA-like autocorrelation structure that can handle overdispersion.

\begin{remark}
For the Poisson INARCH(1) process, the marginal distribution, conditionally on the past, is Poisson; $X_t | X_{t-1}, X_{t-2},\ldots \sim \text{Poisson}(\beta + \alpha X_{t-1})$, where $\alpha \in (0,1)$ is the autocorrelation and the intercept $\beta>0$ is related to the marginal mean of the process in the following way: $E(X_t) = \mu = (1-\alpha)\beta$, $\forall t \in \Nset$.
\end{remark}

Figure \ref{fig:app2} shows the time series data, sample ACF and PACF. From these plots, we see that the autoregressive model of order 1 may be suitable for modeling the monthly count of skin lesions since there is a clear cut-off after lag 1 in the PACF. Moreover, the behavior of the series indicates that it may be mean stationary.

\begin{figure}[!ht]
	\centering
	\includegraphics[width=0.75\textwidth]{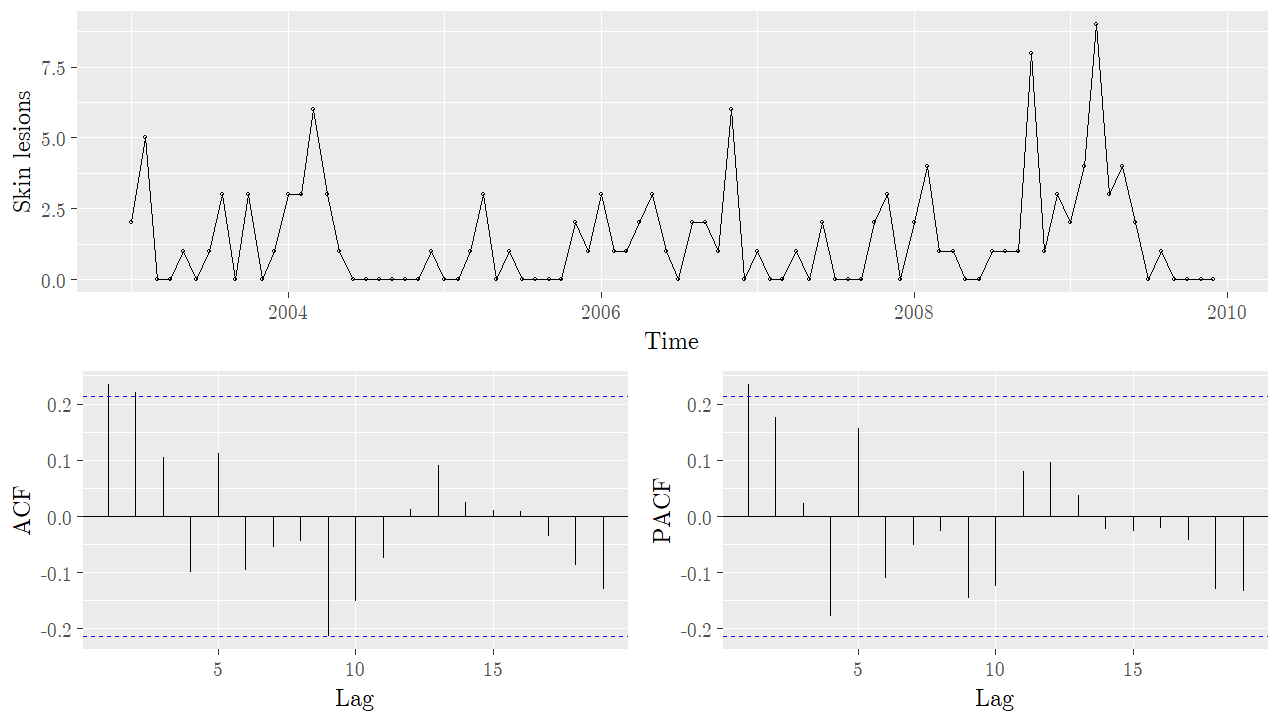}
	\caption{Plots of time series, ACF and PACF for the skin lesions dataset.} 
	\label{fig:app2}
\end{figure}

The sample mean, variance and autocorrelation of the data are $1.4286$, $3.3563$ and $0.2347$, respectively. The sample variance is about $2.5$ times the sample mean, what suggests overdispersion and thus the Poison distribution is not adequate and so the PINAR(1) process would be a poor choice to model the data. In order to properly check this affirmation, we perform an overdispersion test proposed by \citep{Schweer14}, where the test statistic is based on the empirical index of dispersion $\hat{I_d} := S^{2}/\bar{X}$, where $\bar{X}$ and $S^{2}$ are the sample mean and variance, respectively. The null hypothesis $\ho$: $X_1, \ldots, X_n$ \textit{stem from an equidispersed PINAR(1) process} against the alternative $\ha$: $X_1, \ldots, X_n$ \textit{stem from an overdispersed INAR(1) process} [as in the case of the Geo-INAR(1) process]. For the skin lesions data, we have that $\hat{I_d} = 2.3494$. The associated p-value is 
$$p_{\text{value}} = 1 - \Phi\left( \sqrt{\dfrac{n}{2}\dfrac{1 - \alpha^2}{1 + \alpha^2}}\cdot (\hat{I_d} - 1) \right) = 1.11\times 10^{-16},$$
where $\Phi(\cdot)$ denotes the distribution function of the standard normal distribution and if a hypothetical value for the dependence parameter $\alpha$ is not available, \citep{Schweer14} recommends to use a plug-in approach, i.e., to replace $\alpha$ by $\hat{\rho}_1 = 0.2347$. So, this value of the p-value, using any usual significance level (for instance $5\%$), leads to the rejection of the null hypothesis in favor of the alternative hypothesis that states that an overdispersed INAR(1) process is more adequate for modeling this dataset.

We fit to the data the Geo-INAR(1), NGINAR(1), PINAR(1) and Poisson INARCH(1) models. Despite the PINAR(1) model be rejected by the previous hypothesis test, we fit it for comparison purposes. Table \ref{tab:app2_estimates} lists the estimates of the parameters based on the ML estimation. Besides, we provide in this table the associated standard errors and confidence intervals of the parameters with significance level at $5\%$. The parameter $\mu$ gives the average number of bovines with skin lesions,  monthly. All models present similar values. For the Geo-INAR(1) model, we have $\hat{\mu}=1.4239$. The autocorrelation parameter, $\alpha$ indicates how strong is the dependence between the counts in one month to the next one. The Geo-INAR(1) and Poisson INARCH(1) models show a weak autocorrelation between observations, $\hat{\alpha} = 0.31$ and $0.34$, respectively; while the NGINAR(1) and PINAR(1) models present even a weaker dependence, around $0.17$.  

\begin{table}[!th]
\caption{Estimates of the parameters and its associated standard errors, 95$\%$ confidence intervals, AIC, and PMDA and PTP values for 1-step ahead forecast for the monthly count of skin lesions time series data.}
\centering
\begin{adjustbox}{max width=\textwidth}
\begin{tabular}{lcccccccc}
\toprule
\textbf{Model} & \textbf{Parameter} & \textbf{Estimate} & \textbf{Stand. Error} & \textbf{CI} & \textbf{AIC} & \textbf{PMAD} & \textbf{PTP} \\
\midrule
Geo-INAR & $\mu$    & 1.4239 & 0.2784 & (0.8782, 1.9696) & 266.10 & 1.000 & $25.0\%$  \\
		 & $\alpha$ & 0.3137 & 0.1178 & (0.0828, 0.5446) &        & \\
NGINAR   & $\mu$    & 1.4149 & 0.2423 & (0.9400, 1.8898) & 269.10 & 1.125 & $12.5\%$  \\
		 & $\alpha$ & 0.1717 & 0.1105 & (0.0000, 0.3882) &        & & \\
PINAR    & $\mu$    & 1.4264 & 0.1548 & (1.1230, 1.7298) & 298.20 & 0.875 & $25.0\%$  \\
	     & $\alpha$ & 0.1736 & 0.0682 & (0.0399, 0.3073) &        & & \\
INARCH   & $\mu$    & 1.4213 & 0.1963 & (1.0366, 1.8060) & 299.80 & 1.250 & $12.5\%$  \\
	     & $\alpha$ & 0.3391 & 0.0885 & (0.1656, 0.5125) &        & & \\
\bottomrule
\end{tabular}
\end{adjustbox}
\label{tab:app2_estimates}
\end{table}

To perform a more accurate comparison between the models, Table \ref{tab:app2_qties} presents empirical and estimated quantities - plugged-in the ML estimates. Since the Geo-INAR(1) and NGINAR(1) models have geometric marginals, quantities based only on the first moments are insufficient to make a good comparison. Thus, we use mixed moments up to order 4 of an INAR(1) process given by \citep{Schweer14}. Empirically, the mixed moments up to order 4 are defined through the following notation
$$\mu(s_1, \ldots, s_{r-1}) := E(X_t \cdot X_{t+s_1}\cdots X_{t+s_{r-1}}), 0 \leq s_1 \leq \ldots \leq s_{r-1} \text{ and } r \in \Nset.$$
So the case $r = 1$ corresponds to the marginal mean $\mu_X = E(X_t)$. In the case of a stationary INAR(1) process \citep{Schweer14} prove that the first and second higher-order moments are given by

$\mu(k) = \sigma_{X}^{2} \alpha^{k} + \mu_{X}^{2},$ and

$\mu(k, l) = (\bar{\mu}_{X, 3} - \sigma_{X}^{2})\alpha^{l+k} +
(1 + \mu_{X})\sigma_{X}^{2} \alpha^{l} + \mu_{X}\sigma_{X}^{2}(\alpha^{l-k} + \alpha^{k}) + \mu_{X}^{3},$

\noindent respectively, for any $0 \leq k \leq l$, where $\bar{\mu}_{X, r} := E[(X - \mu_X)^{r}]$ denotes the central moments of $X$, and the innovations ${\varepsilon_t}$ have existing moments $\mu_{\varepsilon, r} := E(\varepsilon_{t}^{r})$, for $r \leq 4$.

\begin{table}[!h]
	\caption{Comparison among the models based on empirical and estimated quantities for the monthly count of skin lesions time series data.}
	\centering
	\begin{tabular}{lrrrr}
		\toprule
		\textbf{Quantity} & \textbf{Empirical} & \textbf{Geo-INAR} & \textbf{NGINAR} & \textbf{PINAR} \\
		\midrule
		$\kappa_1$        &  1.4286            &  1.4239           &  1.4149         & 1.4264 \\
		$\kappa_2$        &  3.3563            &  3.4514           &  3.4168         & 1.4264 \\
		skew.             &  1.8378            &  2.0712           &  2.0719         & 0.8373 \\
		kurt.             &  6.8999            &  6.2897           &  6.2927         & 0.7011 \\
		$I_d$             &  2.3494            &  2.4239           &  2.4149         & 1.0000 \\
		$p_0$             &  0.4048            &  0.4126           &  0.4141         & 0.2402 \\
		$\mu(1)$          &  2.8434            &  3.1102           &  2.5886         & 2.2822 \\
		$\mu(2)$          &  2.7683            &  2.3671           &  2.1027         & 2.0776 \\
		$\mu(1, 1)$       & 12.1205            & 12.9346           & 10.1989         & 5.8908 \\
		$\mu(1,2)$        &  6.9756            &  7.0969           &  4.7849         & 3.7129 \\
		\bottomrule
	\end{tabular}
	\label{tab:app2_qties}
\end{table}

In Table \ref{tab:app2_qties} consider the following quantities: mean, variance, skewness, kurtosis, index of dispersion (ratio of the variance to the mean), probability of zero, first higher-order moment with lag 1, first higher-order moment with lag 2, and second high-order moment with lag 1, denoted by $\kappa_1$, $\kappa_2$, skew., kurt., $I_d$, $p_0$, $\mu(1)$, $\mu(2)$, and $\mu(1,1)$, respectively. Note that the Geo-INAR(1) model estimates quantities closer to the empirical ones, specially in terms of high-order moments, quantities related to the correlation structure of the process. Therefore, our process not only has better adherence to the sample distribution but also has a superior performance in capturing the correlation structure in the data. As expected, the PINAR(1) model presents poor results, reinforcing its inadequacy to modeling this dataset. As regards Poisson INARCH(1) model, it is cumbersome to obtain the high-order moments and, to the best of our knowledge, have not been provided in previous papers.

The PIT histogram is a graphical representation of the probability integral transform \citep{czado09}. Besides being a tool for evaluating the goodness-of-fit of a model, the PIT histogram is also used to check the predictive performance since it considers not only conditional moments but the complete conditional distribution. If the fitted model is adequate, the PIT histogram mimics the shape of a uniform histogram. U-shaped and inverse-U shaped histograms indicate underdispersion and overdispersion, respectively, from the uniform distribution. Figure \ref{fig:PIT2} presents the PIT histogram for all four models and corroborate the fact that the PINAR(1) and the Poisson INARCH(1) cannot handle with the overdispersion presented in the data. Another way to compare models is by calculating the Akaike Information Criterion (AIC); the smaller the AIC of a model, the better the model is. Table \ref{tab:app2_estimates} shows that the Geo-INAR(1) model presents smaller values of the AIC.

\begin{figure}[!ht]
	\centering
	\includegraphics[width=0.75\textwidth]{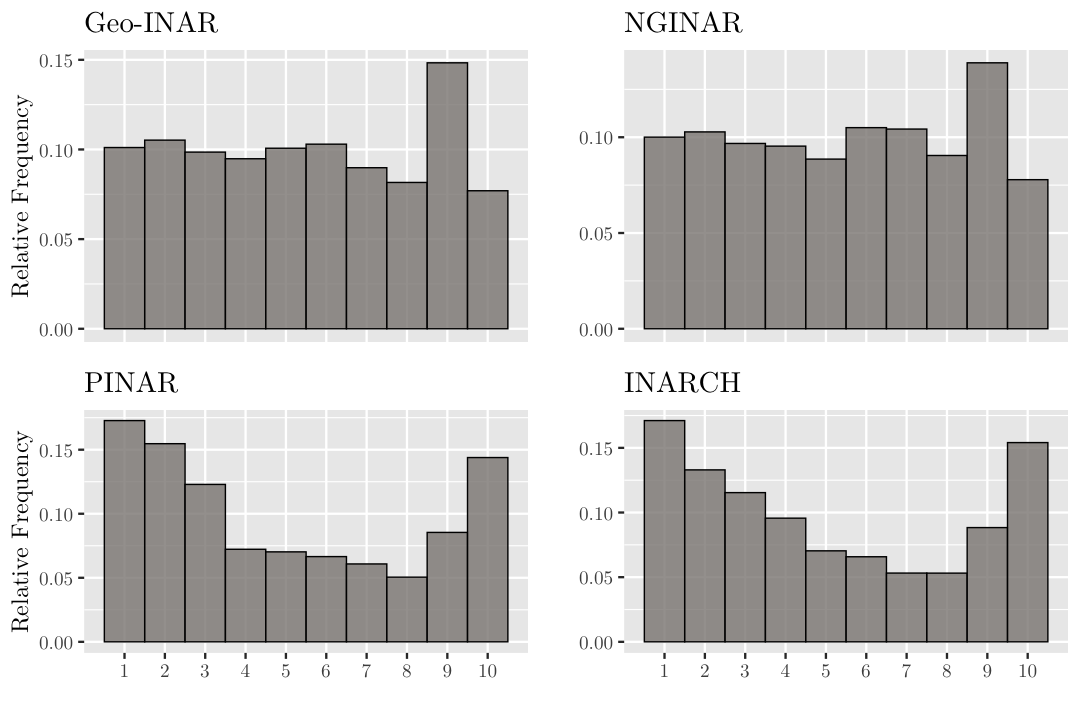}
	\caption{PIT histograms for the monthly count of skin lesions time series for different models.} 
	\label{fig:PIT2}
\end{figure}

So far, we may conclude that the Geo-INAR(1) and NGINAR(1) models are more appropriate choices for modeling the dataset. However, the Geo-INAR(1) presents smaller AIC and has more adherence to the empirical data than the NGINAR(1). So, we now proceed to discuss the goodness-of-fit of the Geo-INAR(1) model based on the residuals $R_t := X_t - \hat{E}(X_t|X_{t-1})$ and on the jumps $J_t := X_t - X_{t-1}$, for $t = 2, \ldots, n$. Note that $E(J_t) = 0$ and $Var(J_t) = 2\mu(1+\mu)(1-\alpha)$. In Figure \ref{fig:resJump2}, we present plots of the sample autocorrelation function of the residuals and the jumps against time (Shewhart control chart) with $\pm 3\sigma_J$ limits chosen as the benchmark chart as proposed in \citep{Weib09}, where $\sigma_J = \sqrt{Var(J_t)} = 2.1766$. These plots indicate that the residuals $R_2, \ldots, R_n$ are not correlated, hence our model seems to have captured well the dependence of the time series and that there is not a particular point causing a huge impact in the model. From the Shewhart control chart around 97\% of the points are within the control limits. 
Additionally, we perform the Ljung-Box test to ensure the independence of the residuals \citep{boxjenkins}. The associated p-value, for lag $1$, is $0.2017$, thus we accept the null hypothesis that the residuals are independently distributed, using any usual significance level. The same is valid for higher-order lags, as shown in the third panel of Figure \ref{fig:resJump2}.

\begin{figure}[ht]
	\centering
	\includegraphics[width=\textwidth]{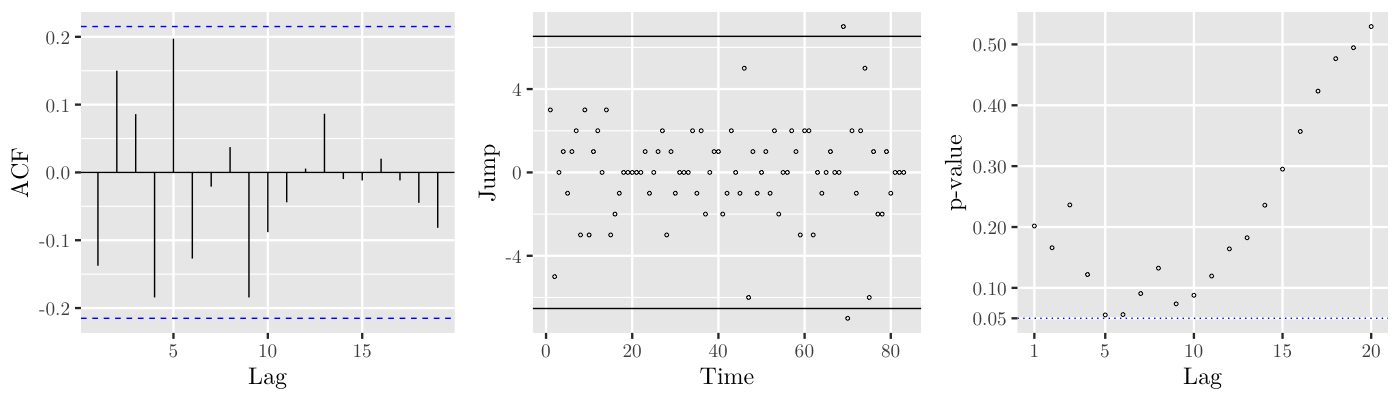}
	\caption{Plots of the sample ACF of the residuals, jumps against time, and Ljung-Box p-values for the skin lesions dataset.} 
	\label{fig:resJump2}
\end{figure}

Also, we calculate the standardized Pearson residuals, 
$$r_t := R_t / \sqrt{\widehat{Var}(X_t|X_{t-1})},\; \text{ for } t = 2, \ldots, n,$$ 
to check if the Geo-INAR(1) model has well captured the overdispersion. According to \citep{Harvey89}, a sample variance of the residuals greater than 1 indicates overdispersion with respect to the model that is being fitted. Since the empirical variance of the residuals is $0.9606$, there is evidence that our Geo-INAR(1) process captured well the overdispersion of the data. On the other hand, the NGINAR(1) process has an empirical variance of the residuals equals to $0.9143$, a worse performance than our model.

To verify the prediction performance of each model, we separated the dataset into a training dataset ($n^{\prime} = 76$ observations) and a test dataset ($m=8$ observations). In the training dataset, we computed the ML estimators using each of the four processes while the test dataset, was used to calculate the $1$-step ahead conditional median as our point forecast. To compare the results we use the predicted mean absolute deviation (PMAD) and the percentage of true prediction (PTP) given, respectively, by
\begin{align*}
\text{PMAD} &= \frac{1}{m}\sum_{t = {n^{\prime}}}^{{n^{\prime}}+m-1}{|X_{t+1} - \hat{X}_{t+1}|},\; \text{ and}\\ 
\text{PTP} &= \frac{1}{m}\sum_{t={n^{\prime}}}^{{n^{\prime}}+m-1}{\ind_{\{X_{t+1} = \hat{X}_{t+1}\}}}\times 100\%,
\end{align*}
where $\hat{X}_{t+1}$ is the predicted value and $\ind_{\{\cdot\}}$ is the indicator function \citep{maiti14, Maiti15}. Table \ref{tab:app2_estimates} shows the comparison metrics; the Geo-INAR(1) process is the best forecaster.

After all, the Geo-INAR(1) model not only has better adherence to the sample distribution of the counts of skin lesion but also has higher forecast performance. The model is an optimal choice to track livestock behavior, helping to keep the number of lesions within an acceptable level, maximizing the quality of the meat produced.

\section*{Acknowledgement}
M. B. Guerrero and H. Ombao would like to acknowledge support for their research support from KAUST. W. Barreto-Souza would like to thank the financial support from Conselho Nacional de Desenvolvimento Científico e Tecnológico (CNPq-Brazil, grant number 305543/2018-0) and Fundação de Amparo à Pesquisa do Estado de Minas Gerais (FAPEMIG-Brazil, grant number APQ-00786-14).

\bibliographystyle{unsrt}  


\appendix
\section{Proofs of propositions}\label{chap:demos}

\noindent \textbf{Note:} Main proofs of the paper.

\propsumzmgcondlabel*
\begin{proof} 
    Note that
    \begin{align*}
    P \left( \bm{\theta}\star X = k | X = x  \right) &= P \left( \sum_{i=1}^{X}{G_{i}} = k \bigg | X = x  \right) \\
	&= P \left( \sum_{i=1}^{x}{G_{i}} = k \right)\\
	&= P \left( S_{x} = k \right).
    \end{align*}
	
	We want to find the pmf of $S_x = \sum\limits_{i=1}^{x}{G_i}$, the sum of $x$ iid ZMG random variables. One way to do this is to compute the pgf of $S_x$.
	\begin{align*}
	\Psi_{S_x}(s) &= E(s^{S_x}) = E\left(s^{ \sum_{i=1}^{x}{G_i} }\right) = E\left( \prod_{i=1}^{x}s^{G_i} \right) (since\,\, \{G_i\} \; \text{is iid} \; \forall i)\\
	&= \prod_{i=1}^{x}E(s^{G}) = (\Psi_{G}(s))^{x} = \left(\dfrac{ 1+\pi\mu_{\varepsilon}(1-s) }{ 1+\mu_{\varepsilon}(1-s) }\right)^{x}.\numberthis 
	\label{eq:gpfsx}
	\end{align*}
	
	Now, rearranging $\Psi_{S_{x}}(s)$ given by Eq. \eqref{eq:gpfsx} in a convenient way and following the steps used by Kolev et al. (2000) in their Proposition $4.1$, with the right reparameterization, we have that
	\begin{align*}
	\Psi_{S_x}(s) &= \left(\dfrac{ 1+\pi\mu_{\varepsilon}(1-s) }{ 1+\mu_{\varepsilon}(1-s) }\right)^{x}\\ 
	&= \left(\frac{1+\pi\mu_{\varepsilon}}{1+\mu_{\varepsilon}}\right)^{x}\left(1 - \frac{(1-\pi)\mu_{\varepsilon}}{(1+\mu_{\varepsilon})(1+\pi\mu_{\varepsilon})}\frac{s}{1-s\frac{\pi\mu_{\varepsilon}}{1+\pi\mu_{\varepsilon}}}\right)^{-x}.
	\end{align*}
	
	Then, we use the fact that $(1-z)^{-x} = \displaystyle\sum_{i=0}^{\infty}{\binom{i+x-1}{i}z^{i}}$, for $|z| < 1$ and $x \geq 1$. Here, we have that 
    $$z = \frac{(1-\pi)\mu_{\varepsilon}}{(1+\mu_{\varepsilon})(1+\pi\mu_{\varepsilon})}\frac{s}{1 - s{\frac{\pi\mu_{\varepsilon}}{1+\pi\mu_{\varepsilon}}}},$$
    which imposes a constraint to the $\alpha$ parameter in terms of $\mu$: $\alpha < \dfrac{\mu}{1+\mu}$. 
    
    Note that the NGINAR(1) process has the same constraint.
	
	Hence, we get that
	\begin{align*}
	&\Psi_{S_{x}}(s) = \left(\frac{1+\pi\mu_{\varepsilon}}{1+\mu_{\varepsilon}}\right)^{x}\sum\limits_{i=0}^{\infty}{\binom{i+x-1}{i}\left( \frac{(1-\pi)\mu_{\varepsilon}}{(1+\mu_{\varepsilon})(1+\pi\mu_{\varepsilon})}\frac{s}{1-s\frac{\pi\mu_{\varepsilon}}{1+\pi\mu_{\varepsilon}}}  \right)^{i}}\\[10pt]
	&= \left(\frac{1+\pi\mu_{\varepsilon}}{1+\mu_{\varepsilon}}\right)^{x} + \left(\frac{1+\pi\mu_{\varepsilon}}{1+\mu_{\varepsilon}}\right)^{x}\times \\
	&\phantom{=} \sum\limits_{i=1}^{\infty}{s^{i}\binom{i+x-1}{i} \left(\frac{(1-\pi)\mu_{\varepsilon}}{1+\mu_{\varepsilon}}\frac{1}{1+\pi\mu_{\varepsilon}}\right)^{i}\left(1 - s\frac{\pi\mu_{\varepsilon}}{1+\pi\mu_{\varepsilon}}\right)^{-i}} \\[10pt]
	&= \left(\frac{1+\pi\mu_{\varepsilon}}{1+\mu_{\varepsilon}}\right)^{x} + \left(\frac{1+\pi\mu_{\varepsilon}}{1+\mu_{\varepsilon}}\right)^{x}\times \\
	&\phantom{=} \sum\limits_{i=1}^{\infty}{s^{i}\binom{i+x-1}{i} \left(\frac{(1-\pi)\mu_{\varepsilon}}{1+\mu_{\varepsilon}}\frac{1}{1+\pi\mu_{\varepsilon}}\right)^{i}}\sum\limits_{j=0}^{\infty}{\binom{j+i-1}{j}\left(s\frac{\pi\mu_{\varepsilon}}{1+\pi\mu_{\varepsilon}}\right)^{j}} \\[10pt]
	&= \left(\frac{1+\pi\mu_{\varepsilon}}{1+\mu_{\varepsilon}}\right)^{x} + \left(\frac{1+\pi\mu_{\varepsilon}}{1+\mu_{\varepsilon}}\right)^{x}\times \\
	&\phantom{=} \sum\limits_{i=1}^{\infty}\sum\limits_{j=0}^{\infty}{s^{i+j}\binom{i+x-1}{i}\binom{j+i-1}{j}\left(\frac{(1-\pi)\mu_{\varepsilon}}{1+\mu_{\varepsilon}}\frac{1}{1+\pi\mu_{\varepsilon}}\right)^{i}\left(\frac{\pi\mu_{\varepsilon}}{1+\pi\mu_{\varepsilon}}\right)^{j}} \\[10pt]
	&\stackrel{(I)}{=} \left(\frac{1+\pi\mu_{\varepsilon}}{1+\mu_{\varepsilon}}\right)^{x} + \left(\frac{1+\pi\mu_{\varepsilon}}{1+\mu_{\varepsilon}}\right)^{x}\times \\
	&\phantom{=} \sum\limits_{i=1}^{\infty}\sum\limits_{j=0}^{\infty}{s^{i+j}\binom{i+x-1}{i}\binom{j+i-1}{i-1}\left(\frac{(1-\pi)\mu_{\varepsilon}}{1+\mu_{\varepsilon}}\frac{1}{1+\pi\mu_{\varepsilon}}\right)^{i}\left(\frac{\pi\mu_{\varepsilon}}{1+\pi\mu_{\varepsilon}}\right)^{j}} \\[10pt]
	&\stackrel{(II)}{=} \left(\frac{1+\pi\mu_{\varepsilon}}{1+\mu_{\varepsilon}}\right)^{x} + \left(\frac{1+\pi\mu_{\varepsilon}}{1+\mu_{\varepsilon}}\right)^{x}\times \\
	&\phantom{=} \sum\limits_{k=1}^{\infty}s^{k}\sum\limits_{i=1}^{k}{\binom{i+x-1}{i}\binom{k-1}{i-1}\left(\frac{(1-\pi)\mu_{\varepsilon}}{1+\mu_{\varepsilon}}\frac{1}{1+\pi\mu_{\varepsilon}}\right)^{i}\left(\frac{\pi\mu_{\varepsilon}}{1+\pi\mu_{\varepsilon}}\right)^{k-i}}\\[10pt]
	&= \left(\frac{1+\pi\mu_{\varepsilon}}{1+\mu_{\varepsilon}}\right)^{x} + \left(\frac{1+\pi\mu_{\varepsilon}}{1+\mu_{\varepsilon}}\right)^{x}\times \\
	&\phantom{=} \sum\limits_{k=1}^{\infty}s^{k}\sum\limits_{i=1}^{k}{\binom{k-1}{i-1}\binom{i+x-1}{i}\left(\frac{(1-\pi)\mu_{\varepsilon}}{1+\mu_{\varepsilon}}\right)^{i}\left(\frac{1}{1+\pi\mu_{\varepsilon}}\right)^{i}\left(\frac{\pi\mu_{\varepsilon}}{1+\pi\mu_{\varepsilon}}\right)^{k-i}} \\[10pt]
	&= \pi_{\star}^{x} + \pi_{\star}^{x}\sum\limits_{k=1}^{\infty}s^{k}\sum\limits_{i=1}^{k}{ \binom{i+x-1}{i}(1-\pi_{\star})^{i}\binom{k-1}{i-1}p^{k-i}(1-p)^{i} } \\[10pt]  
	&= \pi_{\star}^{x} + \sum\limits_{k=1}^{\infty}s^{k}P(S_{x} = k),
	\end{align*}
	where for $k =1$, $2$, $\ldots$, 
	$$P(S_{x} = k) = \displaystyle{
		\sum\limits_{i=1}^{k}{ \binom{x+i-1}{i}\pi_{\star}^{x}(1-\pi_{\star})^{i} \binom{k-1}{i-1}p^{k-i}(1-p)^{i} } }.$$
	
	As we can see, the expression here obtained to the pmf of the ZMNB distribution is simpler than that one obtained by Kolev et al. (2000).
	
	\noindent \textbf{Notes:}
	\begin{enumerate}
		\item [(I)] In this step we use the fact that $\displaystyle{\binom{i+j-1}{j} = \binom{j+i-1}{i-1}}$.
		\item [(II)] Use the following result:
		\begin{lemma} $\sum\limits_{i=1}^{\infty}{\sum\limits_{j=0}^{\infty}{s^{i+j}a_{i}b_{j}}}=\sum\limits_{k=1}^{\infty}{s^{k}c_{k}}$, where ${c_{k}=\sum\limits_{l=1}^{k}{a_{l}b_{k-l}}}.$
		\end{lemma}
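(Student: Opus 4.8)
The plan is to prove this by the standard Cauchy-product reindexing: each term $s^{i+j}a_i b_j$ on the left contributes to the coefficient of $s^k$ on the right precisely when $i+j=k$, so I would collect the terms of the double sum according to the value of $k:=i+j$.

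First I would fix $k\geq 1$ and describe the set of pairs $(i,j)$ with $i\geq 1$, $j\geq 0$ and $i+j=k$. Writing $l:=i$, this set is exactly $\{(l,k-l):l=1,\dots,k\}$, since $i\geq 1$ forces $l\geq 1$ while $j=k-l\geq 0$ forces $l\leq k$. Grouping the double series by the value of $i+j$ then gives
\begin{equation*}
\sum_{i=1}^{\infty}\sum_{j=0}^{\infty}s^{i+j}a_i b_j=\sum_{k=1}^{\infty}s^{k}\sum_{l=1}^{k}a_l b_{k-l}=\sum_{k=1}^{\infty}s^{k}c_k,
\end{equation*}
which is the claimed identity.

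The only point needing care is the legitimacy of this regrouping, i.e. the interchange of the order of summation. I would dispose of it in either of two equivalent ways. In the analytic reading relevant to the proof of Proposition~\ref{prop:sumzmg_cond}, the coefficients $a_i$ and $b_j$ appearing there are nonnegative (each is a product of binomial coefficients and probabilities), so Tonelli's theorem for sums over $\Nset\times\Nset$ applies and the rearrangement into the diagonal sum over $k$ is valid for every $s$ for which the series converges, convergence being inherited from the geometric-type expansions used just before. Alternatively, one may read the statement purely at the level of formal power series in $s$, where it is a combinatorial identity on coefficients requiring no convergence hypothesis whatsoever; matching the coefficient of $s^k$ on both sides then gives $c_k=\sum_{l=1}^{k}a_l b_{k-l}$ directly.

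I do not expect any substantive obstacle: once the diagonal bookkeeping $k=i+j$ is in place the identity is immediate, and the nonnegativity of the coefficients (or the formal-power-series viewpoint) removes any delicacy in justifying the interchange of summations.
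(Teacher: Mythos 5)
Your proof is correct and is exactly the standard Cauchy-product diagonal reindexing that the lemma is meant to encode; the paper itself states this lemma without proof inside the argument for Proposition \ref{prop:sumzmg_cond}, so there is nothing to diverge from. Your added remark that the interchange of summations is justified there because the relevant $a_i$ and $b_j$ are nonnegative (Tonelli), or alternatively by reading the identity as one of formal power series, is a worthwhile precision that the paper omits.
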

	\end{enumerate}
	
	To obtain the mean and the variance of $S_x$, note that
	$$E(S_x) = x \mu_\varepsilon(1-\pi) = x E(G)$$
	and
	$$Var(S_x) = x \mu_\varepsilon(1-\pi)\left[1 + \mu_\varepsilon(1+\pi)\right] = x Var(G).$$
\end{proof}

\proptransitionlabel*
\begin{proof}
	\begin{enumerate}
		\item [i.] From state $0$ to state $j$ the proof is straightforward.
		
		From state $i \geq 1$ to state $j$, it follows that
		\begin{align*}
		p_{ij} &= P(X_t = j | X_{t-1} = i) = \frac{P(X_t = j, X_{t-1}=i)}{P(X_{t-1}=i)} \\[7pt]
		&= \frac{P(\bm{\theta}\star X_{t-1} + \varepsilon_t = j, X_{t-1}=i)}{P(X_{t-1}=i)}
		= \frac{P(\varepsilon_t = j - \bm{\theta}\star X_{t-1}, X_{t-1}=i)}{P(X_{t-1}=i)}\\[7pt]
		&= \frac{P(\varepsilon_t = j - \sum_{k=1}^{X_{t-1}}{G_{k}}, X_{t-1}=i)}{P(X_{t-1}=i)}
		= \frac{P(\varepsilon_t = j - \sum_{k=1}^{i}{G_{k}}, X_{t-1}=i)}{P(X_{t-1}=i)}\\[7pt]
		&= \frac{P(\varepsilon_t = j - \sum_{k=1}^{i}{G_{k}})P(X_{t-1}=i)}{\Pr(X_{t-1}=i)}
		= P\left(\varepsilon_t = j - \sum_{k=1}^{i}{G_{k}}\right)\\[7pt]
		&= \sum_{m=0}^{j}{P\left(\varepsilon_t = j - \sum_{k=1}^{i}{G_{k}}, \sum_{k=1}^{i}{G_{k}}=m\right)}\\
		&= \sum_{m=0}^{j}{P\left(\varepsilon_t = j - m, \sum_{k=1}^{i}{G_{k}}=m\right)}\\[7pt]
		&= \sum_{m=0}^{j}{P(\varepsilon_t = j - m)P\left(\sum_{k=1}^{i}{G_{k}}=m\right)}\\
		&= \sum_{m=0}^{j}{\frac{\mu_{\varepsilon}^{j-m}}{(1+\mu_{\varepsilon})^{(j-m)+1}} P\left(\sum_{k=1}^{i}{G_{k}}=m\right)}\\[7pt]
		&= P(\varepsilon_t = j)\sum_{m=0}^{j}{\left( \frac{1+\mu_{\varepsilon}}{\mu_{\varepsilon}} \right)^{m}}P\left(S_{i}=m\right),\; S_{i} \sim \text{ZMNB}(\pi, \mu_\varepsilon, i),\; i \geq 1\\[7pt]
		&= p_{\varepsilon}(j)\left[ P(S_i = 0) + \sum_{m=1}^{j}\left(1 + \frac{1}{\mu_\varepsilon}\right)^{m}P(S_i = m) \right]\\[7pt]
		&= p_{\varepsilon}(j)\displaystyle{ \left[
			\pi_{\star}^{i} + \sum_{m=1}^{j}{\sum_{l=1}^{m}{ \left(1 + \frac{1}{\mu_\varepsilon}\right)^{m}\mbox{A}_{l}^{i}(\pi_{\star})\mbox{B}_{l}^{m}(p)}}\right]},\; \text{ where}
		\end{align*}
		$$\mbox{A}_{l}^{i}(\pi_{\star}) := \displaystyle{\binom{i+l-1}{l}\pi_{\star}^{i}(1-\pi_\star)^{l}}\;
\text{ and }\; 
\mbox{B}_{l}^{m}(p) := \displaystyle{\binom{m-1}{l-1}p^{m-l}(1-p)^{l}}.$$

		\item [ii.] $\begin{aligned}[t] \phantom{=} E(X_{t+1}|X_t) &= E(\bm{\theta}\star X_t+\varepsilon_{t+1}|X_t)\\
		&= E(\bm{\theta}\star X_t|X_t) + E(\varepsilon_{t+1}|X_t)\\
		&= E(G)X_{t} + E(\varepsilon)\\
		&= \alpha X_t + (1-\alpha)\mu.\end{aligned}$
		
		\item [iii.] $\begin{aligned}[t] &\phantom{=} Var(X_{t+1}|X_t) = E\left(X_{t+1}^{2}|X_t\right) - E^{2}\left(X_{t+1}|X_t\right)\\[13pt]
		&\phantom{=}\quad = E\left((\bm{\theta}\star X_{t}+\varepsilon_{t+1})^{2}|X_t\right) - \underbrace{\left( E(G)X_t + E(\varepsilon_{t+1}) \right)^{2}}_{\zeta}\\
		&\phantom{=}\quad = E\left( (\bm{\theta}\star X_t)^{2}|X_t \right) + E\left( 2(\bm{\theta}\star X_t)\varepsilon|X_t \right) + E(\varepsilon^{2}) - \zeta\\[13pt]
		&\phantom{=}\quad = X_t E(G^{2}) + (X^{2}-X)E^{2}(G) + 2X_tE(G)E(\varepsilon) + Var(\varepsilon) + E^{2}(\varepsilon) - \zeta\\[13pt]
		&\phantom{=}\quad = X_t Var(G) + \underbrace{\left( E(G)X_t + E(\varepsilon) \right)^{2}}_{\zeta} + Var(\varepsilon) - \zeta\\
		&\phantom{=}\quad = Var(G)X_t + Var(\varepsilon)\\[13pt]
		&\phantom{=}\quad = \left[ (1+2\mu)(1-\alpha)\alpha \right]X_t + \sigma_{\varepsilon}^{2}.\end{aligned}$
		
		\item [iv.] It follows from routine calculations from INAR processes.
	\end{enumerate}
\end{proof} 

\section{Supplementary Material}\label{chap:sup}

\noindent \textbf{Note:} Remaining proofs of the paper.

\propgenerallabel*
\begin{proof} In what follows, we use the definition of the $\star$ operator and basics results of conditional expectation.
	
	\begin{enumerate}
		\item[i.] $\begin{aligned}[t] E(\bm{\theta} \star X) &= E\left[ E\left(\bm{\theta} \star X|X \right) \right] = E\left[ E\left(\sum_{i=1}^{X}{G_i}\bigg|X\right) \right]\\
		&= E\left[ \sum_{i=1}^{X}{E(G_{i}|X)} \right] \quad (\mbox{since}\,\, G_{i} \perp X, \forall i = 1, 2, \ldots)\\
		&= E\left( \sum_{i=1}^{X}{E(G_i)} \right) \quad (\mbox{the sequence}\,\, \{G_i\}_{i=1}^\infty \text{ is iid})\\
		&= E\left( \sum_{i=1}^{X}{E(G)} \right)\\
		&= E\left( X E(G) \right)\\
		&= E(G)E(X).\end{aligned}$
		
		\item [ii.] Here some additional algebraic manipulations are required. We have that
		\begin{flalign*} E((\bm{\theta} \star X)^{2}) &= E\left[ E \left( \left(\bm{\theta} \star X\right)^{2}|X \right)  \right]
		= E\left[ E\left( \left( \sum_{i=1}^{X}{G_i} \right)\left( \sum_{j=1}^{X}{G_j} \right) \bigg|X\right)    \right]\\
		&= E\left[ E\left( \left(\sum_{i=1}^{X}{G_{i}^{2}} + \sum_{i=1}^{X}\sum_{\stackrel{j=1}{i \neq j}}^{X}{G_{i}G_{j}}\right) \bigg|X \right)   \right]\\
		&= E\left[ E\left(\sum_{i=1}^{X}{G_{i}^{2}}\bigg|X \right) \right] 
		+ E\left[ E\left( \sum_{i=1}^{X}\sum_{\stackrel{j=1}{i \neq j}}^{X}{G_{i}G_{j}}\bigg|X \right) \right]\\
		&= E\left[ \sum_{i=1}^{X}{E\left(G_{i}^{2}|X \right)} \right] 
		+ E\left[ \sum_{i=1}^{X}\sum_{\stackrel{j=1}{i \neq j}}^{X}{E\left( G_{i}G_{j}|X \right)} \right]\\
		&= E\left[ \sum_{i=1}^{X}{E\left(G_{i}^{2} \right)} \right] 
		+ E\left[ \sum_{i=1}^{X}\sum_{\stackrel{j=1}{i \neq j}}^{X}{E\left( G_{i}G_{j} \right)} \right]\\
		&= E\left[ \sum_{i=1}^{X}{E\left(G^{2} \right)} \right] 
		+ E\left[ \sum_{i=1}^{X}\sum_{\stackrel{j=1}{i \neq j}}^{X}{E^{2}(G)} \right]\\
		&= E\left[X E(G^2)\right] + E\left[ (X^2 - X)E^2(G) \right]\\
		&= E(X)E(G^2) + (E(X^2) - E(X))E^2(G)\\
		&= (E(G^2) - E^2(G))E(X) + E^2(G)E(X^2)\\
		&= Var(G)E(X) + E^{2}(G)E(X^{2}).
		\end{flalign*}
		
		\item [iii.] $\begin{aligned}[t] E\left((\bm{\theta} \star X)Y\right) &= E\left[ E\left( \left( \sum_{i=1}^{X}{G_i} \right)Y\bigg|X,Y \right) \right] = E\left[ Y \sum_{i=1}^{X}{E(G_{i}|X,Y)} \right]\\
		&= E\left[ Y \sum_{i=1}^{X}{E(G_{i})} \right] = E\left[ Y \sum_{i=1}^{X}{E(G)} \right] = E\left[ XY E(G) \right]\\
		&= E(G)E(XY).\end{aligned}$
		
		\item [iv.] Immediate. It follows from the application of (i) and (ii) in the K\"{o}ening formula.
		
		\item [v.] Immediate. It follows directly by using (i) and (iii).
	\end{enumerate}
\end{proof}

\begin{lemma}\label{lemma.comp} 
	Let $G_{i} \sim \text{ZMG}(\pi_i, {\mu_{\varepsilon}}_i)$, $i=1, 2, \ldots, h$, $h \geq 1$, with $\pi_i = 1- \alpha_i/{\mu_{\varepsilon}}_i$ and ${\mu_{\varepsilon}}_i = (1-\alpha_i)\mu$. Here, $\mu > 0$ and $0<\alpha_i < 1$, $\forall i$. Then,
	\begin{align}\label{eq:comppgfs}
	\Psi_{G_{h}}&\left( \Psi_{G_{h-1}}\left( \ldots \Psi_{G_{2}} \left(\Psi_{G_{1}}(s)\right)\right)\right)=\nonumber \\
	&\displaystyle{\frac{1 + \left[1 - \frac{\prod_{i=1}^{h}\alpha_i}{\left(1-\prod_{i=1}^{h}\alpha_i\right)\mu}\right]\left(1-\prod_{i=1}^{h}\alpha_i\right)\mu\left(1-s\right)}{1 + \left(1-\prod_{i=1}^{h}\alpha_i\right)\mu\left(1-s\right) }},
	\end{align}
	which is a pgf of a random variable with distribution
	$$\text{ZMG}\left(1-\frac{\prod_{i=1}^{h}\alpha_i}{\left(1-\prod_{i=1}^{h}\alpha_1\right)\mu}, \left(1-\prod_{i=1}^{h}\alpha_1\right)\mu \right).$$
	
	The case where all the $h$ random variable $G_i$'s have the same parameters $\pi = 1- \alpha/{\mu_{\varepsilon}}$ and ${\mu_{\varepsilon}} = (1-\alpha)\mu$ leads to a pgf of a random variable with distribution ZMG$\left(1-\frac{\alpha^h}{\left(1-\alpha^h\right)\mu}, \left(1-\alpha^h\right)\mu \right)$.
\end{lemma}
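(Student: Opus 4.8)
The plan is to reduce the $h$-fold composition to iterated application of a single identity, by conjugating every factor $\Psi_{G_i}$ to a coordinate in which it becomes a multiplication. First, observe that under the stated reparameterization $\pi_i=1-\alpha_i/\mu_{\varepsilon i}$, $\mu_{\varepsilon i}=(1-\alpha_i)\mu$, Eq.~\eqref{eq:pgfzmg} simplifies to
$$\Psi_{G_i}(s)=\frac{1+\big[(1-\alpha_i)\mu-\alpha_i\big](1-s)}{1+(1-\alpha_i)\mu(1-s)}$$
for $s$ in a neighbourhood of $1$, so that $1-\Psi_{G_i}(s)=\alpha_i(1-s)/[1+(1-\alpha_i)\mu(1-s)]$. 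In particular each $\Psi_{G_i}$ is a fractional-linear (M\"obius) function of $s$, hence so is any composition of them; the task is to identify which one.

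I would then introduce the coordinate
$$z(s):=1+\frac{1}{\mu(1-s)},$$
an invertible M\"obius map on a punctured neighbourhood of $1$, and verify the single identity $z\big(\Psi_{G_i}(s)\big)=z(s)/\alpha_i$. This is a one-line computation: insert the expression for $1-\Psi_{G_i}(s)$ into $z(\Psi_{G_i}(s))=1+1/[\mu(1-\Psi_{G_i}(s))]$ and simplify. Granting it, the composition $\Psi_{G_h}\circ\cdots\circ\Psi_{G_1}$ acts in the $z$-coordinate as multiplication by $\prod_{i=1}^h(1/\alpha_i)$, which is exactly how the \emph{single} pgf $\Psi_G$ of $G\sim\text{ZMG}\big(1-\prod_i\alpha_i/((1-\prod_i\alpha_i)\mu),\,(1-\prod_i\alpha_i)\mu\big)$ acts (the same computation with $\alpha=\prod_i\alpha_i$). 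Transporting this equality back through $z^{-1}$ gives precisely the right-hand side of Eq.~\eqref{eq:comppgfs}, and the iid equal-parameters case is the specialization $\alpha_1=\cdots=\alpha_h=\alpha$. One should also note the harmless fact $0<\prod_{i=1}^h\alpha_i<1$, so the limiting mean $(1-\prod_i\alpha_i)\mu$ is positive and the displayed expression is a genuine ZMG pgf.

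The routine alternative is a direct induction on $h$: the base case $h=1$ is Eq.~\eqref{eq:pgfzmg} itself, and the inductive step substitutes the hypothesis $\Psi_{G_{h-1}}\circ\cdots\circ\Psi_{G_1}=\Psi_G$ with product parameter $\beta:=\prod_{i=1}^{h-1}\alpha_i$ into $\Psi_{G_h}$ and clears denominators. The obstacle on that route is pure bookkeeping: one must expand the two-fold ratio of affine functions of $(1-s)$ and check that the cross terms collapse so the outcome again has the one-parameter ZMG shape, now with parameter $\alpha_h\beta$. The coordinate $z$ is exactly what makes that cancellation automatic — it encodes the statement that $z\mapsto z/\alpha$ is a one-parameter group under composition — so the only genuine work is the short verification of $z(\Psi_{G_i}(s))=z(s)/\alpha_i$, and I would present the proof that way.
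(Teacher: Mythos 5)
Your argument is correct, and it takes a genuinely different route from the paper. The paper proves the lemma by induction on $h$: it verifies the base case $h=2$ by computing $1-\Psi_{G_1}(s)=\alpha_1(1-s)/[1+\mu_{\varepsilon 1}(1-s)]$, substituting into $\Psi_{G_2}$, and clearing denominators; the inductive step repeats the same substitution with $\prod_{i=1}^{h-1}\alpha_i$ in place of $\alpha_1$. Your conjugation $z(s)=1+\tfrac{1}{\mu(1-s)}$ replaces that bookkeeping with the single identity $z(\Psi_{G_i}(s))=z(s)/\alpha_i$, which does check out: from $1-\Psi_{G_i}(s)=\alpha_i(1-s)/[1+(1-\alpha_i)\mu(1-s)]$ one gets
\begin{equation*}
z\bigl(\Psi_{G_i}(s)\bigr)=1+\frac{1+(1-\alpha_i)\mu(1-s)}{\mu\alpha_i(1-s)}
=\frac{1}{\alpha_i}\left(1+\frac{1}{\mu(1-s)}\right)=\frac{z(s)}{\alpha_i},
\end{equation*}
so the $h$-fold composition acts in the $z$-coordinate as multiplication by $\prod_i\alpha_i^{-1}$, which is exactly the action of the single ZMG pgf with parameter $\prod_i\alpha_i$, and invertibility of $z$ finishes the argument. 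What your approach buys is a structural explanation of \emph{why} this family of pgfs is closed under composition (it is a one-parameter semigroup conjugate to scalar multiplication), and it makes the equal-parameter case and the limit $h\to\infty$ transparent; what the paper's induction buys is complete self-containedness, with every cancellation displayed explicitly. Two minor points to make explicit if you write this up: the identity $z(\Psi_{G_i}(s))=z(s)/\alpha_i$ is an identity of rational functions, so composing it $h$ times is legitimate on a common punctured neighbourhood of $s=1$ and extends everywhere by continuity; and the conclusion that the resulting pgf is a \emph{proper} ZMG pgf uses $0<\prod_i\alpha_i<1$, which you correctly note.
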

\begin{proof}
	We use mathematical induction:
	\begin{itemize}
		\item Base case: $h=2$.
		
		Note that
		$$\Psi_{G_{1}}(s) = \frac{1 + \left( 1 - \alpha_{1}/{\mu_{\varepsilon}}_{1} \right){\mu_{\varepsilon}}_{1}(1-s)}{1+{\mu_{\varepsilon}}_{1}(1-s)} \Rightarrow
		1 - \Psi_{G_{1}}(s) = \frac{\alpha_{1}(1-s)}{1 + {\mu_{\varepsilon}}_{1}(1-s)}.$$
		Replacing this result into $\Psi_{G_{2}}\left(\Psi_{G_{1}}(s)\right)$ we have that
		\begin{align*}
		\Psi_{G_{2}}\left(\Psi_{G_{1}}(s)\right) &= \frac{1 + \left( 1 - \alpha_{2}/{\mu_{\varepsilon}}_{2} \right){\mu_{\varepsilon}}_{2}\left(1-\Psi_{G_{1}}(s)\right)}{1+{\mu_{\varepsilon}}_{2}\left(1-\Psi_{G_{1}}(s)\right)}\\
		&=\frac{1 + \left[(1 - \alpha_{2}/{\mu_{\varepsilon}}_{2}){\mu_{\varepsilon}}_{2}\right]\left[\frac{\alpha_{1}(1-s)}{1 + {\mu_{\varepsilon}}_{1}(1-s)}\right]}{\frac{1+(1-\alpha_{1}\alpha_{2})\mu(1-s)}{1+{\mu_{\varepsilon}}_{1}(1-s)}}\\
		&=\frac{1 + {\mu_{\varepsilon}}_{1}(1-s) + \left[{\mu_{\varepsilon}}_{2} - \alpha_{2}\right]\alpha_{1}(1-s)}{1+(1-\alpha_{1}\alpha_{2})\mu(1-s)}\\
		&=\frac{1 + \left[1 - \frac{\alpha_{1}\alpha_{2}}{(1-\alpha_{1}\alpha_{2})\mu}\right](1-\alpha_{1}\alpha_{2})\mu(1-s)}{1 + (1-\alpha_{1}\alpha_{2})\mu(1-s)},
		\end{align*}
		which is a pgf of a random variable with distribution 
		$$\text{ZMG}\left( 1 - \frac{\alpha_{1}\alpha_{2}}{(1-\alpha_{1}\alpha_{2})\mu}, (1-\alpha_{1}\alpha_{2})\mu   \right).$$
		
		\item Step case: $h \in \Nset$.
		
		Suppose the result holds for $h - 1 \in \Nset$, $h > 2$, and evaluate the expression for $h \in \Nset$.
		
		By the induction hypothesis we have that
		$$\Psi_{G_{h-1}}\left( \ldots \Psi_{G_{2}} \left(\Psi_{G_{1}}(s)\right)\right)=
		\displaystyle{\frac{1 + \left[1 - \frac{\prod_{i=1}^{h-1}\alpha_i}{\left(1-\prod_{i=1}^{h-1}\alpha_i\right)\mu}\right]\left(1-\prod_{i=1}^{h-1}\alpha_i\right)\mu\left(1-s\right)}{1 + \left(1-\prod_{i=1}^{h-1}\alpha_i\right)\mu\left(1-s\right) }}.$$
		
		Hence,
		$$1 - \Psi_{G_{h-1}}\left( \ldots \Psi_{G_{2}} \left(\Psi_{G_{1}}(s)\right)\right)=
		\displaystyle{\frac{\prod_{i=1}^{h-1}\alpha_i\left(1-s\right)}{1 + \left(1-\prod_{i=1}^{h-1}\alpha_i\right)\mu\left(1-s\right) }}.$$
		
		Replace this result into $\Psi_{G_{h}}\left( \Psi_{G_{h-1}}\left( \ldots \Psi_{G_{2}} \left(\Psi_{G_{1}}(s)\right)\right)\right)$ to conclude the proof:
		\begin{align*}
		&\Psi_{G_{h}}\left( \Psi_{G_{h-1}}\left( \ldots \Psi_{G_{2}} \left(\Psi_{G_{1}}(s)\right)\right)\right)\\
		&= \frac{1 + \left( 1 - \alpha_{h}/{\mu_{\varepsilon}}_{h} \right){\mu_{\varepsilon}}_{h}\left(1 - \Psi_{G_{h-1}}\left( \ldots \Psi_{G_{2}} \left(\Psi_{G_{1}}(s)\right)\right)\right)}{1+{\mu_{\varepsilon}}_{h}\left(1 - \Psi_{G_{h-1}}\left( \ldots \Psi_{G_{2}} \left(\Psi_{G_{1}}(s)\right)\right)\right)}\\
		&= \frac{1 + \left( 1 - \alpha_{h}/{\mu_{\varepsilon}}_{h} \right){\mu_{\varepsilon}}_{h}\left(\frac{\prod_{i=1}^{h-1}\alpha_i\left(1-s\right)}{1 + \left(1-\prod_{i=1}^{h-1}\alpha_i\right)\mu\left(1-s\right) }\right)}{1+{\mu_{\varepsilon}}_{h}\left(\frac{\prod_{i=1}^{h-1}\alpha_i\left(1-s\right)}{1 + \left(1-\prod_{i=1}^{h-1}\alpha_i\right)\mu\left(1-s\right) }\right)}\\
		&=\frac{1 + \left( 1 - \alpha_{h}/{\mu_{\varepsilon}}_{h} \right){\mu_{\varepsilon}}_{h}\left(\frac{\prod_{i=1}^{h-1}\alpha_i\left(1-s\right)}{1 + \left(1-\prod_{i=1}^{h-1}\alpha_i\right)\mu\left(1-s\right) }\right)}{\left[1 + \left(1-\prod_{i=1}^{h}{\alpha_i}\right)\mu(1-s)\right]\big/\left[1 + \left(1-\prod_{i=1}^{h-1}\alpha_i\right)\mu\left(1-s\right)\right]}\\[10pt]
		&=\frac{1 + \left(1-\prod_{i=1}^{h-1}{\alpha_i}\right)\mu(1-s) + \left[ 1 - \frac{\alpha_{h}}{(1-\alpha_{h})\mu} \right](1-\alpha_{h})\mu\prod_{i=1}^{h-1}\alpha_i\left(1-s\right)}{1 + \left(1-\prod_{i=1}^{h}\alpha_i\right)\mu\left(1-s\right)}\\
		&=\frac{1 + \left[1 - \frac{\prod_{i=1}^{h}\alpha_i}{\left(1-\prod_{i=1}^{h}\alpha_i\right)\mu}\right]\left(1-\prod_{i=1}^{h}\alpha_i\right)\mu\left(1-s\right)}{1 + \left(1-\prod_{i=1}^{h}\alpha_i\right)\mu\left(1-s\right)}.
		\end{align*}
	\end{itemize}
\end{proof}

\begin{lemma}\label{prop.sum}
	Let $Z_1 \sim \text{ZMG}\left(1 - \alpha^h, \mu\right)$ and $Z_2 \sim \text{Geo}\left((1 - \alpha^h)\mu\right)$, with $\mu > 0$, $\alpha \in (0,1)$, and $h \geq 1$. Also, let $Z_1 \perp Z_2$. Then,
	$$Z_1 + Z_2 \iguald Z \sim \text{Geo}(\mu).$$
\end{lemma}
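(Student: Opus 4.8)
The plan is to argue entirely through probability generating functions, using the fact that a pgf determines the distribution uniquely on a neighbourhood of $s=1$, so that the whole lemma reduces to one short cancellation. First I would write down the three relevant pgfs in the standard form used throughout the paper. Recalling that the $\text{ZMG}(\pi,\mu_\varepsilon)$ pgf is $\frac{1+\pi\mu_\varepsilon(1-s)}{1+\mu_\varepsilon(1-s)}$ (cf.\ Eq.~\eqref{eq:pgfzmg} and the subsequent display $\bm{\theta}\star X\sim\text{ZMG}(1-\alpha,\mu)$), the parameter choice $\pi=1-\alpha^h$, $\mu_\varepsilon=\mu$ gives
$$\Psi_{Z_1}(s)=\frac{1+(1-\alpha^h)\mu(1-s)}{1+\mu(1-s)},$$
while the $\text{Geo}\big((1-\alpha^h)\mu\big)$ pgf (cf.\ Eq.~\eqref{eq:pgfe}) is
$$\Psi_{Z_2}(s)=\frac{1}{1+(1-\alpha^h)\mu(1-s)},$$
both valid for $s$ in an interval containing $1$.

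Next, since $Z_1\perp Z_2$, the pgf of the sum factorizes as $\Psi_{Z_1+Z_2}(s)=\Psi_{Z_1}(s)\,\Psi_{Z_2}(s)$. The only computation is the cancellation of the common factor $1+(1-\alpha^h)\mu(1-s)$, which immediately yields
$$\Psi_{Z_1+Z_2}(s)=\frac{1}{1+\mu(1-s)}.$$
I would then identify the right-hand side as the pgf of a $\text{Geo}(\mu)$ random variable (Eq.~\eqref{eq:pgfx}) and invoke uniqueness of pgfs to conclude $Z_1+Z_2\iguald Z\sim\text{Geo}(\mu)$. A convolution-of-pmfs argument is also available but is considerably more laborious, so the pgf route is clearly preferable.

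There is essentially no real obstacle; the only points deserving a word of care are bookkeeping ones. First, one should check that the domains of the three pgfs overlap on a common neighbourhood of $s=1$, so that the factorization and the cancellation are legitimate there. Second, one should note that the hypotheses $\alpha\in(0,1)$ and $h\ge1$ force $1-\alpha^h\in(0,1)$, so the factor $1+(1-\alpha^h)\mu(1-s)$ is nonzero near $s=1$ (ruling out the degenerate limits $\alpha\to0$ and $\alpha\to1$) and the cancellation is valid. With these observations in place the proof is immediate.
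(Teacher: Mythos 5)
Your proof is correct and is essentially identical to the paper's: both factor $\Psi_{Z_1+Z_2}(s)=\Psi_{Z_1}(s)\Psi_{Z_2}(s)$ by independence, cancel the common factor $1+(1-\alpha^h)\mu(1-s)$, and identify the result as the $\text{Geo}(\mu)$ pgf. Your extra remarks on domain overlap and non-degeneracy are sensible bookkeeping but add nothing beyond the paper's argument.
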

\begin{proof}
\begin{align*}
    \Psi_{Z_1+Z_2}(s) &= \Psi_{Z_{1}}(s)\Psi_{Z_{2}}(s)\\
	&=\frac{1 + (1-\alpha^h)\mu(1-s)}{1+\mu(1-s)} \cdot \frac{1}{1+(1-\alpha^h)\mu(1-s)}\\
	&=\frac{1}{1+\mu(1-s)}\\
	&=\Psi_{Z}(s).
\end{align*}
\end{proof}

\propassociatedoperatorslabel*
\begin{proof}
	We have that
	\begin{align*}
	\Psi_{\bm{\theta}_{h}\star \bm{\theta}_{h-1}\star \cdots \star \bm{\theta}_{1}\star X}(s) &= \Psi_{X}\left( \Psi_{G_{h}}\left( \Psi_{G_{h-1}}\left( \ldots \Psi_{G_{2}} \left(\Psi_{G_{1}}(s)\right)\right)\right)\right)\\
	&=\frac{1}{1+\mu\left[1-\Psi_{G_{h}}\left( \Psi_{G_{h-1}}\left( \ldots \Psi_{G_{2}} \left(\Psi_{G_{1}}(s)\right)\right)\right)\right]}.
	\end{align*}
	
 Replace the above result in Equation \eqref{eq:comppgfs} to obtain
	$$\displaystyle{\Psi_{\bm{\theta}_{h}\star \bm{\theta}_{h-1}\star \cdots \star \bm{\theta}_{1}\star X}(s) =
		\frac{1 + \left(1 - \prod_{i=1}^{k}\alpha_i \right)\mu(1-s)}{1+\mu(1-s)}.}$$
\end{proof}
\propconditionalpgflabel*
\begin{proof}
	The proof follow the same procedures used in \citep{Ristic09}.
	
	At first, by the definition of the $\star$ operator we obtain that
	\begin{align*}
	E\left( s^{X_{t+h}}\big| X_t \right) &= E\left( s^{\bm{\theta}\star X_{t+(h-1)} + \varepsilon_{t+h}}\big|X_t \right) \\
	&= E\left[E\left( s^{\bm{\theta}\star X_{t+(h-1)}}| X_{t+(h-1)} \right)\bigg| X_t\right]E\left( s^{\varepsilon_{t+h}}\big| X_t \right) \\
	&= E\left[E\left( s^{\sum_{i=1}^{X_{t+(k-1)}}{G_i}}| X_{t+(h-1)} \right)\bigg| X_t\right]\Psi_{\varepsilon}(s)\\
	&= E\left[ \prod_{i=1}^{X_{t+(h-1)}}{E\left( s^{G_i}\big| X_{t+(h-1)} \right)}\big| X_t \right]\Psi_{\varepsilon}(s) \\
	&= E\left[ \prod_{i=1}^{X_{t+(h-1)}}{E\left( s^{G} \right)}\big| X_t \right]\Psi_{\varepsilon}(s)\\
	&= E\left[ \left( \Psi_{G}(s) \right)^{X_{t+(h-1)}} \big| X_t \right]\Psi_{\varepsilon}(s).
	\end{align*}
	
	Then, after repeating $h$ times, we reach to
	$$E\left( s^{X_{t+h}}\big| X_t \right) = \prod\limits_{i=0}^{h-1}{\Psi_{\varepsilon}\left( \Psi_{G}^{(i)}(s) \right)}\cdot \left( \Psi_{G}^{(h)}(s)\right)^{X_t},$$  
	where $\Psi_{G}^{(h)}(s) = \Psi_{G}\left( \Psi_{G}^{(h-1)}(s) \right)$ and $\Psi_{G}^{(0)}(s) = s$. Applying Eq. \eqref{eq:pgfgeneral1}, we obtain that
	\begin{equation}
	E\left( s^{X_{t+h}}\big| X_t \right) = \Psi_{X}(s)\left[ \Psi_{X}\left( \Psi_{G}^{(h)}(s) \right) \right]^{-1}\left( \Psi_{G}^{(h)}(s)\right)^{X_t}.
	\label{eq:condpgf}
	\end{equation}
	
	By induction, we can prove that $\Psi_{G}^{(h)}(s) = \dfrac{1 + \left[(1-\alpha^{h-1})\mu - \alpha^{h}\right](1-s)}{1 + (1-\alpha^{h-1})\mu(1-s)}$ and substituting this in Eq. \eqref{eq:condpgf}, we finally achieve the conditional pgf as
	\begin{align*}
	E\left( s^{X_{t+h}}\big| X_t \right) =
	&\Psi_{X}(s)\left[ \Psi_{X}\left( \frac{1 + \left[(1-\alpha^{h-1})\mu - \alpha^{h}\right](1-s)}{1 + (1-\alpha^{h-1})\mu(1-s)} \right) \right]^{-1}\\
	&\times \left( \frac{1 + \left[(1-\alpha^{h-1})\mu - \alpha^{h}\right](1-s)}{1 + (1-\alpha^{h-1})\mu(1-s)} \right)^{X_t}.
	\end{align*}
\end{proof}

\clsproplabel*
\begin{proof} To prove this proposition it is enough to show that all conditions given in Theorems 3.1 and 3.2 from Tj\o stheim (1986) are satisfied.
	
	To begin note that $E\left(|X_t|^{2} \right) < \infty$ and that $E(X_t|X_{t-1}) = \alpha X_{t-1} + (1-\alpha)\mu$, as a function of $\mu$ and $\alpha$, is almost surely three times continuously differentiable in an open set $\Theta$ containing $\bm{\theta}_{0} = (\mu_{0}, \alpha_{0})$, the true value of the unknown parameter $\bm{\theta}$.
	
	\noindent \textbf{Condition 1:}
	\begin{itemize}
	    \item [] $\text{(I$_{1}$)}$:\; $E\left\{ \left| \dfrac{\partial E(X_t|X_{t-1})}{\partial \theta_{i}}(\bm{\theta}_{0}) \right|^{2} \right\} < \infty$, and
	    
	    \vspace{0.1cm}
	    
	    \item [] $\text{(II$_{1}$)}$: $E\left\{ \left| \dfrac{\partial^{2} E(X_t|X_{t-1})}{\partial \theta_{i} \partial \theta_{j}}(\bm{\theta}_{0}) \right|^{2} \right\} < \infty$
	\end{itemize}
	for $i,j=1,2$.
	
	Indeed. Note that in (I$_{1}$) we have
	\begin{equation*}
	\begin{aligned}
	E\left\{ \left| \dfrac{\partial E(X_t|X_{t-1})}{\partial \mu}(\bm{\theta}_{0}) \right|^{2} \right\} &=
	E\left\{ \left| 1 - \alpha_{0} \right|^{2} \right\} = \left( 1 - \alpha_{0} \right)^{2} < \infty \;\; \text{and} \\[10pt]
	E\left\{ \left| \dfrac{\partial E(X_t|X_{t-1})}{\partial \alpha}(\bm{\theta}_{0}) \right|^{2} \right\} &=
	E\left\{ \left| X_{t-1} - \mu_{0} \right|^{2} \right\} = Var(X_{t-1}) < \infty .
	\end{aligned}
	\end{equation*}
	
	While from (II$_{1}$) follows
	\begin{equation*}
	\begin{aligned}
	E\left\{ \left| \dfrac{\partial^{2} E(X_t|X_{t-1})}{\partial \mu^{2}}(\bm{\theta}_{0}) \right|^{2} \right\} &=
	E\left\{ \left| 0 \right|^{2} \right\} = 0 < \infty,\\[10pt]
	E\left\{ \left| \dfrac{\partial^{2} E(X_t|X_{t-1})}{\partial \mu \partial \alpha}(\bm{\theta}_{0}) \right|^{2} \right\} &=
	E\left\{ \left| -1 \right|^{2} \right\} = 1 < \infty,\\[10pt]
	E\left\{ \left| \dfrac{\partial^{2} E(X_t|X_{t-1})}{\partial \alpha^{2}}(\bm{\theta}_{0}) \right|^{2} \right\} &=
	E\left\{ \left| 0 \right|^{2} \right\} = 0 < \infty \;\; \text{and}\\[10pt]
	E\left\{ \left| \dfrac{\partial^{2} E(X_t|X_{t-1})}{\partial \alpha \partial \mu}(\bm{\theta}_{0}) \right|^{2} \right\} &=
	E\left\{ \left| -1 \right|^{2} \right\} = 1 < \infty.
	\end{aligned}
	\end{equation*}
	
	\noindent \textbf{Condition 2:} The vectors $\partial E(X_t|X_{t-1})(\bm{\theta}_{0})/\partial \theta_{i}$,  $i=1$, $2$, are linearly independent in the sense that if $a_1$ and $a_2$ are arbitrary real numbers such that
	$$E\left\{ \left|\sum\limits_{i=1}^{2}a_{i} \dfrac{\partial E(X_t|X_{t-1})}{\partial \theta_{i}}(\bm{\theta}_{0}) \right|^{2} \right\} = 0,$$
	then $a_1 = a_2 = 0$.
	
	Note that
	\begin{equation*}
	\begin{aligned}
	E\left\{ \left|a_{1} \dfrac{\partial E(X_t|X_{t-1})}{\partial \mu}(\bm{\theta}_{0}) + 
	a_{2} \dfrac{\partial E(X_t|X_{t-1})}{\partial \alpha}(\bm{\theta}_{0}) \right|^{2} \right\} &= 0 \Rightarrow\\[10pt]
	E\left\{ \left| a_1\left(1-\alpha_{0}\right) + a_2\left(X_{t-1} - \mu_{0}\right) \right|^{2} \right\} &= 0 \Rightarrow\\[10pt]
	\underbrace{a_{1}^{2}\left( 1 - \alpha_{0} \right)^{2}}_{>0} + \underbrace{a_{2}^{2}Var\left( X_{t-1} \right)}_{>0}  &= 0 \Rightarrow\\[10pt]
	a_{1}^{2}\underbrace{\left( 1 - \alpha_{0} \right)^{2}}_{>0} = 0 \;\; \text{and} \;\; a_{2}^{2}\underbrace{Var\left( X_{t-1} \right)}_{>0} &= 0 \Rightarrow\\[10pt]
	a_{1}^{2} = 0 \;\; \text{and} \;\; a_{2}^{2} &=0.
	\end{aligned}
	\end{equation*}
	Then $a_1 = a_2 = 0$.
	
	\noindent \textbf{Condition 3:} For $\bm{\theta} \in \Theta$, there exists functions $G_{t-1}^{ijk}(X_{1},\ldots, X_{t-1})$ and $H_{t}^{ijk}(X_{1},\ldots, X_{t})$ such that
	\begin{equation*}
	\begin{aligned}
	(\text{I}_{3}): T_{t-1}^{ijk}(\bm{\theta}) &= \left| \dfrac{\partial E(X_t|X_{t-1})}{\partial \theta_{i}}(\bm{\theta})\dfrac{\partial^{2} E(X_t|X_{t-1})}{\partial \theta_{j} \partial \theta_{k}}(\bm{\theta}) \right| \; \leq \; G_{t-1}^{ijk},\;\; \\
	E\left(G_{t-1}^{ijk}\right) < \infty \\[10pt]
	(\text{II}_{3}):D_{t}^{ijk}(\bm{\theta}) &= \left| \left\{X_{t} - E(X_t|X_{t-1})(\bm{\theta}) \right\}\dfrac{\partial^{3} E(X_t|X_{t-1})}{\partial \theta_{i} \partial \theta_{j} \partial \theta_{k}}(\bm{\theta}) \right| \; \leq \; H_{t}^{ijk},\;\; \\
	E\left(G_{t}^{ijk}\right) < \infty
	\end{aligned}
	\end{equation*}
	for $i,j,k=1,2$.
	
	In (I$_{3}$) realize that $T_{t-1}^{111}(\bm{\theta})=T_{t-1}^{122}(\bm{\theta})=T_{t-1}^{211}(\bm{\theta})=T_{t-1}^{222}(\bm{\theta})=0$ and that
	\begin{equation*}
	\begin{aligned}
	T_{t-1}^{112}(\bm{\theta})=T_{t-1}^{121}(\bm{\theta}) = |\alpha - 1| < 1\\[6pt]
	T_{t-1}^{212}(\bm{\theta})=T_{t-1}^{221}(\bm{\theta}) = |X_{t-1} - \mu|.
	\end{aligned}
	\end{equation*}
	If we choose $G_{t-1}^{ijk} = \left( X_{t-1} - \mu \right)^{2} + 1$, $\forall i,j,k = 1,2$ we guarantee that $T_{t-1}^{ijk}(\bm{\theta}) \leq G_{t-1}^{ijk}$; besides $E\left(G_{t-1}^{ijk}\right) = Var(X_{t-1}) + 1 < \infty$.
	
	Concerning to (II$_{3}$), $\dfrac{\partial^{3} E(X_t|X_{t-1})}{\partial \theta_{i} \partial \theta_{j} \partial \theta_{k}}(\bm{\theta}) = 0$, $\forall i,j,k =1$, $2$. 
	
	So take $H_{t}^{ijk} = 0$, $\forall i,j,k=1$, $2$ to satisfy the condition.
	
	These three conditions ensure that $\hat{\bm{\theta}}_{CLS}$ is a strongly consistent estimator to $\bm{\theta}$.
	
	Theorem 3.2 in Tj\o stheim (1986) refers to the asymptotic distribution of $\hat{\bm{\theta}}_{cls}$ and states that
	$$\sqrt{n}\left( \hat{\bm{\theta}}_{CLS} - \bm{\theta} \right) \cd \text{N}\left(\bm{0}, \Sigma \right),$$
	wherein $\Sigma = U^{-1} R U^{-1}$.
	
	The elements involved in $\Sigma$ calculation are:
	\begin{itemize}
		\item $U$:
		\begin{equation*}
		\begin{aligned}
		U &= E\left\{ \dfrac{\partial E(X_t|X_{t-1})^{T}}{\partial \bm{\theta}}(\bm{\theta})\cdot \dfrac{\partial E(X_t|X_{t-1})}{\partial \bm{\theta}}(\bm{\theta}) \right\}\\[10pt]
		&= E\left\{
		\left(
		\begin{array}{c}
		1-\alpha\\
		X_{t-1} - \mu
		\end{array}
		\right)
		\left(
		\begin{array}{cc}
		1-\alpha & X_{t-1} - \mu
		\end{array}
		\right)
		\right\}\\[10pt]
		&= \left(
		\begin{array}{cc}
		(1-\alpha)^2 & 0 \\
		0            & \mu(1+\mu)    
		\end{array}
		\right)
		\end{aligned}
		\end{equation*}
		
		\item $f_{t|t-1}(\bm{\theta})$:
		\begin{equation*}
		\begin{aligned}
		f_{t|t-1}(\bm{\theta}) &= E\left\{ \left(X_t - E(X_t|X_{t-1})\right)\left(X_t - E(X_t|X_{t-1})\right)^{T}|X_{t-1} \right\}\\[6pt]
		&= E\left\{ \left(X_t - E(X_t|X_{t-1})\right)^{2}|X_{t-1} \right\}\\[6pt]
		&=Var(X_t|X_{t-1})\\[6pt]
		&=\alpha(1-\alpha)(1+2\mu)X_{t-1} + \sigma_{\varepsilon}^{2}
		\end{aligned}
		\end{equation*}
		
		\item $R$:
		\begin{equation*}
		\begin{aligned}
		R &= E\left\{ \dfrac{\partial E(X_t|X_{t-1})^{T}}{\partial \bm{\theta}}(\bm{\theta}) f_{t|t-1}(\bm{\theta}) \dfrac{\partial E(X_t|X_{t-1})}{\partial \bm{\theta}}(\bm{\theta}) \right\}\\[10pt]
		&= E\left\{ f_{t|t-1}(\bm{\theta})\dfrac{\partial E(X_t|X_{t-1})^{T}}{\partial \bm{\theta}}(\bm{\theta})\dfrac{\partial E(X_t|X_{t-1})}{\partial \bm{\theta}}(\bm{\theta}) \right\}\\[10pt]
		&= E\left\{ f_{t|t-1}(\bm{\theta})\left(
		\begin{array}{cc}
		(1-\alpha)^2              & (X_{t-1} - \mu)(1-\alpha) \\
		(X_{t-1} - \mu)(1-\alpha) & (X_{t-1} - \mu)^{2}    
		\end{array}
		\right)\right\}
		\end{aligned}
		\end{equation*}
		\begin{itemize}
			\item [] r$_{11}$:
			\begin{equation*}
			\begin{aligned}
			r_{11} &= E\left\{ f_{t|t-1}(\bm{\theta})(1-\alpha)^2  \right\}\\
			&= E\left\{ \left[\alpha(1-\alpha)(1+2\mu)X_{t-1} + \sigma_{\varepsilon}^{2}\right](1-\alpha)^2 \right\}\\
			&= (1-\alpha)^{2}\left[\alpha(1-\alpha)(1+2\mu)\right]E\left(X_{t-1}\right) + (1-\alpha)^{2}\sigma_{\varepsilon}^{2}\\
			&= (1-\alpha)^{3}\alpha\mu + 2(1-\alpha)^{3}\alpha\mu^{2} + (1-\alpha)^{3}\mu\left[1 + (1-\alpha)\mu\right]\\
			&= (1-\alpha)^{3}\mu\left[\alpha(1+\mu) + (1+\mu)\right]\\
			&= \mu(1+\mu)(1+\alpha)(1-\alpha)^{3}
			\end{aligned}
			\end{equation*}
			\item [] r$_{12}$=r$_{21}$:
			\begin{equation*}
			\begin{aligned}
			r_{12} &= E\left\{ f_{t|t-1}(\bm{\theta})(X_{t-1} - \mu)(1-\alpha)  \right\}\\
			&= E\left\{ \left[\alpha(1-\alpha)(1+2\mu)X_{t-1} + \sigma_{\varepsilon}^{2}\right](X_{t-1} - \mu)(1-\alpha)  \right\}\\
			&= \alpha(1-\alpha)^{2}(1+2\mu)E\left\{ X_{t-1}(X_{t-1} - \mu)\right\}+
			\sigma_{\varepsilon}^{2}(1-\alpha)E\left\{ (X_{t-1} - \mu) \right\}\\
			&= \alpha(1-\alpha)^{2}(1+2\mu)\left[ E \left(X_{t-1}^{2}\right) - E\left(X_{t-1}\right)^{2} \right]\\
			&= \mu(1+\mu)(1+2\mu)(1-\alpha)^{2}\alpha
			\end{aligned}
			\end{equation*}
			\item [] r$_{22}$:
			\begin{equation*}
			\begin{aligned}
			r_{22} &= E\left\{ f_{t|t-1}(\bm{\theta})(X_{t-1} - \mu)^{2}   \right\}\\
			&= E\left\{ \left[\alpha(1-\alpha)(1+2\mu)X_{t-1} + \sigma_{\varepsilon}^{2}\right](X_{t-1} - \mu)^{2}   \right\}\\
			&= \alpha(1-\alpha)(1+2\mu)E\left\{ X_{t-1}(X_{t-1} - \mu)^{2} \right\} + \sigma_{\varepsilon}^{2}E\left\{(X_{t-1} - \mu)^{2}\right\}\\ 
			&= \alpha(1-\alpha)(1+2\mu)\left[ E\left(X_{t-1}^{3}\right) - 2\mu E\left(X_{t-1}^{2}\right) + \mu E\left( X_{t-1} \right)  \right] + \sigma_{\varepsilon}^{2}Var(X_{t-1})\\
			&= \alpha(1-\alpha)(1+2\mu)\left[ \mu + 4\mu^{2} + 3\mu^{3} \right] + (1-\alpha)\mu^{2}(1+\mu)\left[1+(1-\alpha)\mu\right]\\
			&= \alpha(1-\alpha)(1+2\mu)\mu(1+\mu)(1+\mu+2\mu) + (1-\alpha)\mu^{2}(1+\mu)\left[1+(1-\alpha)\mu\right]
			\end{aligned}
			\end{equation*}
		\end{itemize}
		\item $\Sigma$:
		\begin{equation*}
		\begin{aligned}
		\Sigma &= U^{-1}R U^{-1}\\[10pt]
		&= \left(
		\begin{array}{cc}
		\dfrac{r_{11}}{(1-\alpha)^{4}}           & \dfrac{r_{12}}{\mu(1+\mu)(1-\alpha)^{2}} \\
		\dfrac{r_{21}}{\mu(1+\mu)(1-\alpha)^{2}} & \dfrac{r_{22}}{\mu^{2}(1+\mu)^{2}}   
		\end{array}
		\right)\\[10pt]
		&= \left(
		\begin{array}{cc}
		\dfrac{\mu(1+\mu)(1+\alpha)}{1-\alpha} & (1+2\mu)\alpha \\
		(1+2\mu)\alpha                         & \dfrac{(1+\mu + 2\mu)\sigma_{G}^{2} + \sigma_{\varepsilon}^{2}}{\mu(1+\mu)}   
		\end{array}
		\right)\\
		\end{aligned}
		\end{equation*}
	\end{itemize}
\end{proof}

\end{document}